\newcommand\myshade{70}
\theoremstyle{definition}
\newtheorem{definition}{Definition}
\newtheorem{example}{Example}
\newtheorem{remark}{Remark}
\newtheorem{theorem}{Theorem}
\newtheorem{corollary}[theorem]{Corollary}
\newtheorem{lemma}[theorem]{Lemma}
\newtheorem{proposition}[theorem]{Proposition}
\DeclarePairedDelimiter\abs{\lvert}{\rvert}%
\DeclarePairedDelimiter\norm{\lVert}{\rVert}%
\let\oldabs\abs
\def\abs{\@ifstar{\oldabs}{\oldabs*}}
\let\oldnorm\norm
\def\norm{\@ifstar{\oldnorm}{\oldnorm*}}
\newcommand{\tr}{{\rm Tr}}
\newcommand{\rs}{{\rm RS}}
\newcommand{\grs}{{\rm GRS}}
\newcommand{\GF}{{\rm GF}}
\newcommand{\rank}{{\rm rank}}
\newcommand{\SSS}{{\color{blue}\mathsf{S}}}
\newcommand{\FF}{\mathbb{F}}
\newcommand{\BB}{\mathbb{B}}
\newcommand{\C}{\mathcal{C}}
\newcommand{\B}{\mathcal{B}}
\newcommand{\A}{\mathcal{A}}
\newcommand{\bM}{\pmb{M}}
\newcommand{\bA}{\pmb{A}}
\newcommand{\bB}{\pmb{B}}
\newcommand{\bR}{\pmb{R}}
\newcommand{\bS}{\pmb{S}}
\newcommand{\bx}{{\pmb{x}}}
\newcommand{\bc}{\pmb{c}}
\newcommand{\todo}[1]{{\color{red}TODO: #1}}
\begin{document}
\title{\huge Explicit Low-Bandwidth Evaluation Schemes for Weighted Sums of Reed-Solomon-Coded Symbols}
\author{}
\author{Han Mao Kiah, Wilton Kim, Stanislav Kruglik, San Ling, Huaxiong Wang\\[2mm]
\small School of Physical and Mathematical Sciences, Nanyang Technological University, Singapore\\
email: \{hmkiah, wilt0002, stanislav.kruglik, lingsan, hxwang\}@ntu.edu.sg}

\abstract{%
	
	Motivated by applications in distributed storage, distributed computing, and homomorphic secret sharing, we study communication-efficient schemes for computing linear combinations of coded symbols. 
	Specifically, we design low-bandwidth schemes that evaluate the weighted sum of $\ell$ coded symbols in a codeword $\pmb{c}\in\mathbb{F}^n$, when we are given access to $d$ of the remaining components in $\pmb{c}$.
	
	Formally, suppose that $\FF$ is a field extension of $\BB$ of degree $t$. 
	Let $\pmb{c}$ be a codeword in a Reed-Solomon code of dimension $k$ and our task is to compute the weighted sum of $\ell$ coded symbols.
	In this paper, for some $s<t$, we provide an explicit scheme that performs this task by downloading $d(t-s)$ sub-symbols in $\BB$ from $d$ available nodes, whenever $d\ge \ell|\BB|^s-\ell+k$.
	In many cases, our scheme outperforms previous schemes in the literature. 
	
	Furthermore, we provide a characterization of evaluation schemes for general linear codes. 
	Then in the special case of Reed-Solomon codes, we use this characterization to derive a lower bound for the evaluation bandwidth.
}

\date{}
\maketitle

\section{Introduction}
\label{sec:intro}

As applications in distributed storage and distributed computation become more prevalent, {\em communication bandwidth} has become a critical performance metric of codes.
In such distributed settings, data is represented as a vector $\bx =\FF^k$ for some finite field $\FF$.
To protect against {\em erasures}, the data is then encoded into a codeword $ \pmb{c}=(c_1,\ldots,c_n)\in\FF^n$ and each codeword symbol $c_i$ is stored in one of $n$ nodes or servers.
The beautiful \textit{Reed-Solomon} (RS) code \cite{originalRSpaper} allows one to recover $\bx$ by accessing any available $k$ nodes. In other words, even if $n-k$ nodes fail, we are able to recover the data. Unfortunately, this solution is not ideal in the scenario where only one node fails. 
In this case, to repair a single node failure, i.e. to recover one codeword symbol, we download $k$ codeword symbols. This is costly as $k$ is usually much greater than one! 
Typically, the total amount of information downloaded is referred to as the \textit{repair bandwidth} and the repair bandwidth problem was first introduced in \cite{Dimakis}. Since then, there has been a flurry of research to design new codes that reduce this repair bandwidth \cite{somesurvey, somesurvey2}.
Of interest to this paper is the pioneering work of Guruswami and Wootters \cite{Guruswami}, where the authors revisit the ubiquitous Reed-Solomon codes and demonstrate that the repair bandwidth can be reduced dramatically when more than $k$ nodes are available. 
Later on, their methods were extended in a variety of scenarios, including multiple erasures and different parameter regimes \cite{Guruswami, DM, Dau, HM1, HM2, SW, Tamo1, Tamo2}.  
\newpage

In this paper, we consider the case where any $\ell$ nodes are unavailable. 
Instead of recovering the contents of these $\ell$ nodes, we recover a {\em weighted sum} or linear combination of them. 
This is motivated by applications in distributed computation.
Specifically, in this setting, we have a computation task ${\cal T}$ that is split into $\ell$ smaller subtasks ${\cal T}_1,{\cal T}_2,\ldots, {\cal T}_\ell$, and we distribute these subtasks to $\ell$ servers to compute locally.
However, if there is a straggler among these $\ell$ servers, that is, a server fails to complete its subtask, we are unable to determine ${\cal T}$. 
To mitigate this {\em straggler} problem, we employ coding and use $n>\ell$ servers.
Specifically, we design a code over finite field $\FF$ and $n-\ell$ additional subtasks ${\cal T}_{\ell+1},{\cal T}_{\ell+2},\ldots, {\cal T}_n$ such that the following holds for any codeword $\bc=(c_1,c_2,\ldots, c_n)$: we have $c_i$ corresponds to the result of subtask ${\cal T}_i$ for $1\le i\le n$. 
As before, we distribute these $n$ codeword symbols to $n$ servers. 
If the code is chosen to be some code that corrects $n-d$ erasures, then we are able to determine $c_1,c_2,\ldots, c_\ell$ as long as we have the results of any $d$ nodes.
Typically, for most computation tasks, the result of ${\cal T}$ is obtained from a weighted sum of $c_1,c_2,\ldots, c_\ell$. 
Hence, the result of ${\cal T}$ is {\em one} symbol in $\FF$.
As before, we see that it is rather wasteful if we simply download $d$ symbols from the $d$ available nodes. Therefore, our central task is to reduce this amount of downloaded information, which we refer to as {\em evaluation bandwidth}.

\subsection{Motivating Example and Some Comparisons}
\label{sec:sdmm}

We illustrate an application scenario of our evaluation schemes. 
Specifically, we look at the problem of {\em communication-efficient secure distributed matrix multiplication} (SDMM), studied by Machado {\em et al.}~\cite{Salim}. 
Consider two matrices $\bA$ and $\bB$ with elements from $\FF$ and we want to compute the product $\bA\bB$ in a {secure} and {distributed} fashion. 
To this end, we split the initial task into $\ell$ subtasks of computing well-defined multiplication of sub-matrices $\bA_i\cdot \bB_i$ ($i\in[\ell]$) where 
$\bA =\left[\begin{array}{c;{2pt/2pt}c;{2pt/2pt}c;{2pt/2pt}c}
	\bA_1 & \bA_2 & \cdots & \bA_\ell
\end{array}\right]$ and 
$\bB^T =\left[\begin{array}{c;{2pt/2pt}c;{2pt/2pt}c;{2pt/2pt}c}
	\bB_1^T & \bB_2^T & \cdots & \bB_\ell^T
\end{array}\right]^T$, and $T$ denotes the transpose of the matrix. Therefore, we want to evaluate $\SSS = \sum_{i=1}^\ell \bA_i\bB_i$.

To both mitigate the presence of stragglers and provide security, Machado {\em et al.} provided the following solution \cite{Salim}. 
First, choose $n+T$ distinct evaluation points $\beta_1,\ldots, \beta_\ell, \beta_{\ell+1}, \ldots, \beta_{\ell+T}, \alpha_1,\ldots,$ $\alpha_{n-\ell}$. 
Next, for some security parameter $T\le n-\ell$, generate $2T$ random matrices $\bR_1,\ldots, \bR_T$ and $\bS_1,\ldots, \bS_T$. Here, $\bR_i$'s and $\bA_i$'s have the same dimensions, while  $\bS_i$'s and $\bB_i$'s have the same dimensions.
Next, we choose two polynomials%
\footnote{Strictly speaking, $\phi(x)$ (and $\psi(x)$) is a multi-valued polynomial. That is, $\phi(x) = \left[\phi_{st}(x)\right]_{st}$ where $\left[\phi_{st}(x)\right]_{st}$ is matrix with the same dimensions as $\bA_i$'s and $\bR_i$'s, and each $\phi_{st}(x)$ is a polynomial of degree at most $\ell+T-1$. Nevertheless, the methods in our work apply with a scaling of the bandwidth.}
$\phi(x)$ and $\psi(x)$ of degrees at most $\ell+T-1$ such that the following hold.
\begin{align*}
	\phi(\beta_i) &= \bA_i & \psi(\beta_i) &= \bB_i & \text{ for }i\in [\ell],\\ \phi(\textcolor{black}{\beta_{\ell+i}}) &= \bR_i & \psi(\textcolor{black}{\beta_{\ell+i}}) &= \bS_i & \text{ for }i\in [T].
\end{align*}
Then for $i\in[n-\ell]$, we provide the server corresponding to $\alpha_i$ the values $\phi(\alpha_i)$ and $\psi(\alpha_i)$, and ask it to compute $\theta(\alpha_i)$ where $\theta(x)=\phi(x)\psi(x)$.

Now, observe that $\theta(x)$ is a polynomial of degree at most $2\ell+2T-2$ (see Section~\ref{sec:prelim} for formal definitions). Hence, we have that $\big(\theta(\beta_1),\ldots, \theta(\beta_\ell), \theta(\alpha_1), \ldots,  \theta(\alpha_{n-\ell})\big)$ belongs to a Reed-Solomon code of length $n$ and dimension $k=2\ell+2T-1$. Hence, we are able to find $\theta(\beta_i)$'s and compute $\SSS$ using the results of any $k$ servers. Furthermore,  Machado {\em et al.} showed that any $T$ servers cannot collude to obtain any information on $\bA_i$'s and $\bB_i$'s~\cite{Salim}. 

To reduce the evaluation bandwidth, Machado {\em et al.} in the same paper then employ \textit{trace polynomials} (see Section~\ref{sec:prelim} for a formal definition)~\cite{Salim}. 
However, as the authors study codes that attain the so-called cut-set bound, the proposed codes are defined over an impractically  large field. 
For example, when $\ell=4$ and $T=36$, then $k=79$, Machado {\em et al.}'s scheme requires a minimal%
\footnote{Here, we refer to the smallest binary field $\GF(2^t)$ where the evaluation bandwidth is less than $kt$, the bandwidth used by the classical approach.} 
binary field $\textrm{GF}(2^{8085})$ and when $\FF=\GF(2^{8085})$, the number of servers $n-\ell$ is at most $128$. 
In this case, when $89$ nodes are available, Machado {\em et al.}'s scheme recovers the four code symbols by downloading $516096$ bits. 
In contrast, Scheme~\ref{scheme:trace} (see Section~\ref{sec:evaluation}) from this work can be defined over $\FF=\GF(2^8)$ and allows up to $n-\ell=252$ servers.
In comparison%
\footnote{Admittedly, relative to the codeword symbol, Machado {\em et al.}'s scheme downloads only 63.83 code symbols, while Scheme~\ref{scheme:trace} downloads 69.5 code symbols and Scheme~\ref{scheme:subspace} downloads 66.9 code symbols. Nevertheless, Scheme~\ref{scheme:trace} and Scheme~\ref{scheme:subspace} perform field operations over a significantly smaller field.}, 
the evaluation bandwidth of Scheme~\ref{scheme:trace} only downloads 556 bits of information whenever 139 nodes are available while the evaluation bandwidth of Scheme~\ref{scheme:subspace} only downloads 535 bits whenever 107 nodes are available.

\begin{figure}[!t]
	\centering
	\includegraphics[width=\textwidth]{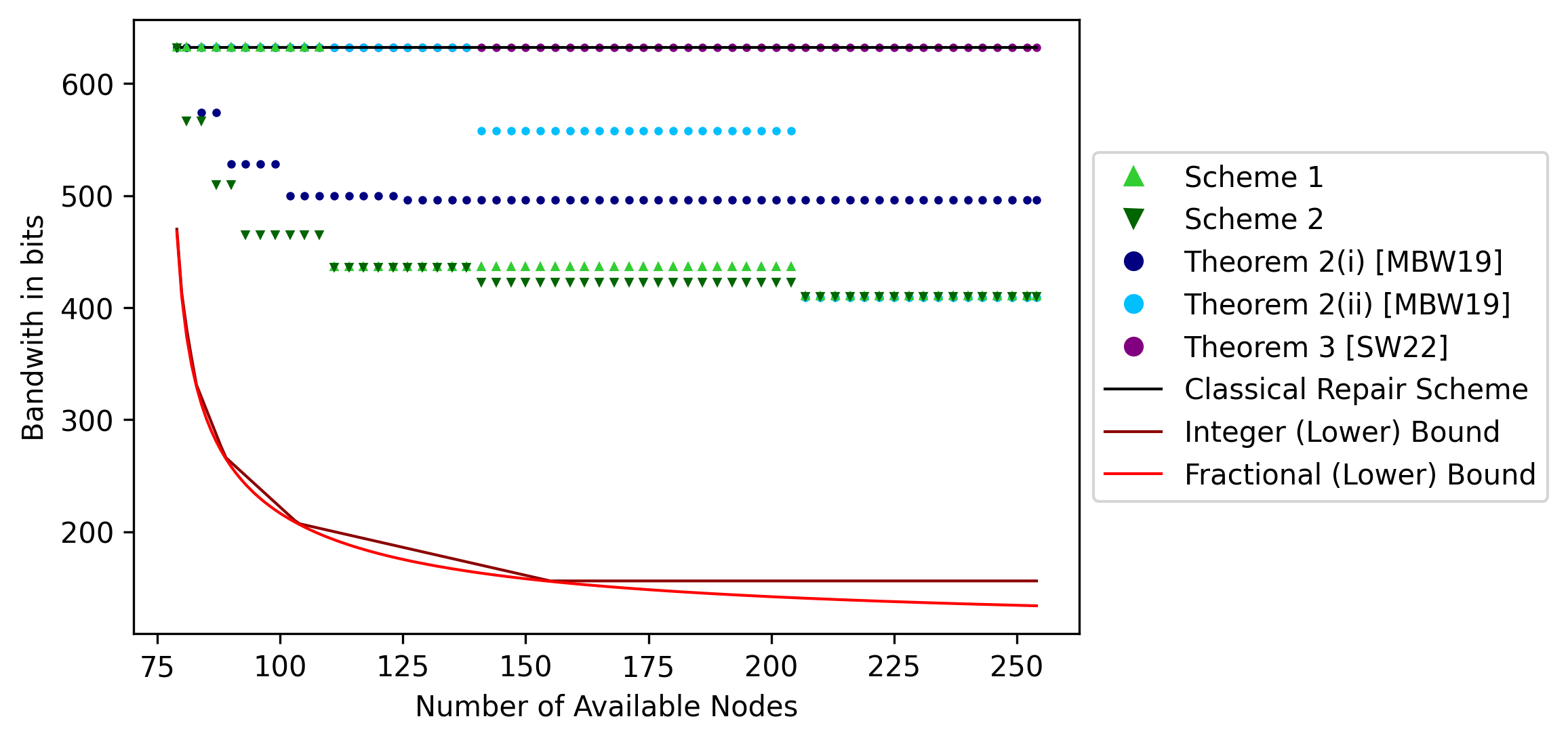}
	\caption{Recovery bandwidth for sums of $\ell=2$ codesymbols of RS code over $\textrm{GF}(2^{8})$ with $k=79$}
	\label{1}
\end{figure}

Next, we briefly describe certain evaluation / repair schemes that apply to our scenario (a detailed discussion is given in both Sections~\ref{sec:related} and~\ref{sec:numerical}). The first work due to Shutty and Wootters~\cite{SW} considers a problem similar to ours.
Specifically, given some data vector $\pmb{x}\in \mathbb{F}^k$ and some linear function $S$, the authors aim to recover $S(\bx)$ from the corresponding codeword of Reed-Solomon code of dimension $k$. In~\cite{SW}, a general low-bandwidth evaluation framework was introduced and the authors demonstrate the existence of low-bandwidth evaluation schemes. 
In contrast, for our setup, we are interested in obtaining the weighted sum of $\ell$ coded symbols where $\ell < k$. 
While the schemes in~\cite{SW} clearly apply to our scenario, we exploit the fact that $\ell$ is small to further reduce the bandwidth.
For example, when $\ell=4$, $k=2^{30}$, and $n=|\FF|=2^{32}$, the scheme in~\cite{SW} evaluates $S(\bx)$ by downloading $26\cdot 2^{30}$ bits of information from $n-\ell$ available nodes.
In contrast, for the task with the same set of parameters, Schemes~\ref{scheme:trace} and~\ref{scheme:subspace} only download $8\times 2^{30} $ and $7.5\times 2^{30}$ bits, respectively. We also remark that generally the schemes in~\cite{SW} require large values of $|\FF|$ and $k$ to improve over the classical repair bandwidth.

Another approach to our task is to recover the $\ell$ coded symbols independently and then proceed to evaluate $\SSS$. This is therefore equivalent to the repair problem for {\em multiple erasures}. 
Recently, Dau {\em et al.}~\cite{HM1} generalized the work of Guruswami and Wootters to obtain low-bandwidth repair schemes for two and three erasures.
Later, Mardia, Bartan, and Wootters~\cite{Mardia} extended the repair schemes for any number of erasures and improved the repair bandwidth in some cases.
For example, when $\ell=4$, $k=79$, and $n=|\FF|=2^{8}$, the best scheme in~\cite{Mardia} recovers all four code symbols by downloading $600$ bits of information from $n-\ell$ available nodes. In contrast, when we simply want to find a linear combination of these four symbols, the bandwidth can be reduced. For the same set of parameters, Schemes~\ref{scheme:trace} and~\ref{scheme:subspace} download $556$ and $535$ bits, respectively.

To summarize this example, we fix $n=|\FF|=2^8$ and $k=79$.
In Figure~\ref{1}, we fix $\ell=2$, vary the number of available nodes, and compare the evaluation bandwidths of our schemes with those in~\cite{Mardia,SW}. 
On the other hand, in Table~\ref{table:k79}, we fix the number of available nodes to be $n-\ell$, vary $\ell$ (the size of the weighted sum), and compare the evaluation bandwidths. 
In the table, we also highlight the schemes that yield the lowest bandwidths. In Section~\ref{sec:numerical}, we provide a more comprehensive set of comparisons by varying $d$ and $\ell$. 
Generally, we observe that for most values of $d$ and $\ell$, our schemes outperform those in~\cite{Mardia, SW}.

\begin{table}[t]
\begin{center}
	\small
	\begin{tabular}{|l|r|r|r|r|r|r|}
		\hline
		Scheme                   & $\ell=k=79$ & $\ell=40$ & $\ell=11$ & $\ell=5$ & $\ell=4$ & $\ell=2$ \\ \hline
		Naive                    & $632$    & $632$              & $632$     & $632$     & $632$    & $632$    \\ \hline
		\cite{SW}                & $632$    & $632$              & $632$     & $632$     & $632$    & $632$    \\ \hline
		Scheme~$1$               & $632$    & $632$              & $632$     & $616$     & $556$    & $410$    \\ \hline
		Scheme~$2$               & $632$    & $632$              & \pmb{$630$}     & \pmb{$564$}     & \pmb{$535$}    & $410$    \\ \hline
		\cite[Theorem~2]{Mardia} & $632$    & $632$              & $632$     & $616$     & $600$    & $496$    \\ \hline
		\cite[Theorem~3]{Mardia} & $632$    & $632$              & $632$     & $632$     & $632$    & \pmb{$409$}    \\ \hline
		Fractional (Lower) Bound & $2$ & $64$ & $117$ & $128$ & $130$ & $134$\\ \hline
		Integral (Lower) Bound & $2$ & $80$ & $139$ & $150$ & $152$ & $156$\\ \hline
	\end{tabular}
	\vspace{-5mm}
\end{center}
	\caption{Recovery bandwidth for Reed-Solomon code over $\textrm{GF}(2^8)$ and $k=79$}
	\label{table:k79}
\end{table}

\subsection{Our Contributions}

Therefore, our task can be stated as follows.  Consider a codeword $\pmb{c} = (c_1,c_2,\ldots, c_n)$ of a Reed-Solomon code of length $n$ and dimension $k$.  
Without loss of generality, we consider the first $\ell$ positions and fix $\ell$ nonzero coefficients $\kappa_1,\kappa_2,\ldots, \kappa_\ell \in \mathbb{F}$.
Then given any $d\le n$ nodes, our objective is to design schemes that evaluate the sum $\SSS\triangleq \sum_{i=1}^\ell \kappa_ic_i$ by downloading as little information as possible from these $d$ nodes. 
For ease of exposition, we assume that these $d$ nodes are distinct from the nodes containing $c_1,c_2,\ldots, c_\ell$.

\begin{enumerate}[(A)]
\setlength{\itemsep}{0pt}	
\item {\bf Explicit low-bandwidth evaluation schemes for Reed-Solomon codes}. 
In Section~\ref{sec:evaluation}, we provide two explicit schemes to recover $\SSS=\sum_{i=1}^\ell \kappa_ic_i$ from some $d$ available nodes. We explicitly formulate the information to be downloaded in terms of {\em trace polynomials} and {\em subspace polynomials}. 
 While our methods build upon the frameworks of \cite{Guruswami, HM1, HM2}, a key feature of our schemes is that they can be adjusted according to the parameters: 
 $\ell$ (the number of coded symbols in the sum $\SSS$) and 
 $d$ (the number of available nodes).
 As expected, the evaluation bandwidth of our schemes is smaller when $\ell$ is smaller, or when $d$ is bigger. Specifically, the evaluation bandwidths of the two schemes are as follows.
 
\noindent$\bullet$ {\em Scheme 1}: Let $\FF$ be a field extension of $\BB$ of degree $t$. Whenever $d \ge \ell|\mathbb{B}|^{t-1}-\ell + k$, we can recover $\SSS$ by downloading $d$ subsymbols (in $\BB$), or $d\lceil\log_2|\mathbb{B}|\rceil$ bits, from any $d$ available nodes.

\noindent$\bullet$ {\em Scheme 2}: Suppose further that $W =\{0,w_1,\ldots,w_{|\BB|^s-1}\}$ is a $\BB$-linear subspace of $\FF$ with dimension $s$. Whenever $d \ge \ell|\mathbb{B}|^{s}-\ell + k$, we can recover $\SSS$ by downloading $d(t-s)$ subsymbols (in $\BB$), or $d(t-s)\lceil\log_2|\mathbb{B}|\rceil$ bits, from any $d$ available nodes.

\item {\bf Lower bound for the evaluation bandwidth for Reed-Solomon codes}. By modifying the techniques in \cite{Mardia} and \cite{weiqi}, we provide a characterization of $\BB$-linear evaluation schemes for general linear codes. 
Then in the special case of Reed-Solomon codes, we use this characterization to determine the required evaluation bandwidth. As with our evaluation schemes, the lower bound depends on both $\ell$ and $d$. In the case where $\ell=1$ and $d=n-1$, we recover the bounds obtained in \cite{Guruswami, DM}. Unfortunately, the bound is not tight in general (see Figures~\ref{1} and~\ref{2}, and Tables~\ref{table:k79} and~\ref{table}). 
Nevertheless, this is the first nontrivial $\ell$- and $d$-dependent bound, and is also a lower bound on the repair bandwidth of multiple erasures (studied in \cite{HM1,Mardia}).
\end{enumerate}

\section{Preliminaries}
\label{sec:prelim}
Let $[n]$ denote the set of integers $\{1,2,\ldots,n\}$ and let $\FF$ denote a finite field. 
An {\em $\FF$-linear code} $[n, k]$ code $\mathcal{C}$ is an $\FF$-linear subspace of {\em dimension} $k$. Each element of $\cal C$ is called a {\em codeword}. 
The orthogonal complement to $\mathcal{C}$ is called the {\em dual code} $\mathcal{C}^{\perp}$. 
Then by definition, for each $\pmb{c}=(c_1,\ldots, c_n)\in\mathcal{C}$ and $\pmb{c}^{\perp}=(c_1^{\perp},\ldots, c_n^{\perp})\in\mathcal{C}^{\perp}$, we have that $\sum_{i=1}^nc_ic_i^{\perp}=0$. 

Of interest to this paper is the ubiquitous Reed-Solomon codes \cite{originalRSpaper, Dau2022}.

\begin{definition}\label{def:rs}
	Let $\mathbb{F}[x]$ denote the ring of polynomials over $\mathbb{F}$. The {\em Reed-Solomon code} $\rs(\mathcal{W}, k)$ of dimension $k$ with evaluation points $\mathcal{W}= \{\omega_1, \dots , \omega_n\}  \subseteq \mathbb{F}$ and code rate $R = \frac{k}{n}$
	is defined as:
	\[\rs(\mathcal{W}, k) = \left\{
	(f(\omega_1), \ldots , f(\omega_n))
	: f \in \mathbb{F}[x],\, \deg(f) < k\right\}.\]
	On the other hand, the {\em generalized Reed-Solomon code} 
	$\grs(\mathcal{W}, k, \pmb{\lambda})$ of dimension $k$ with evaluation points $\mathcal{W}$ and multiplier $\pmb{\lambda}=(\lambda_1,\ldots,\lambda_n)\in\mathbb{F}^n$, where $\lambda_i\ne0$ for all $i\in [n]$, is defined as:
	\[\grs(\mathcal{W}, k, \pmb{\lambda}) = \left\{
	(\lambda_1f(\omega_1), \ldots , \lambda_nf(\omega_n))
	: f \in \mathbb{F}[x],\, \text{deg}(f) < k\right\}.\]
\end{definition}

Clearly, the generalized Reed-Solomon code $\grs(\mathcal{W}, k, \pmb{\lambda})$ with multiplier vector $\pmb{\lambda}=(1,\ldots,1)$ is a Reed-Solomon code $\textrm{RS}(\mathcal{W}, k)$ with the same set of evaluation points and dimension. Another important fact is that the dual of $\textrm{RS}(\mathcal{W}, k)$ is $\textrm{GRS}(\mathcal{W}, n-k, \pmb{\lambda})$, where the multiplier vector can be explicitly given as (see, for example \cite[Ch. 10, 12]{MacWilliams1977}) 
\begin{equation}\label{GRSM}
	\lambda_i^{-1}=\prod_{j\in [n]\setminus\{i\}}(\omega_i-\omega_j).    
\end{equation}
For convenience, when it is clear from context, $f(x)$ denotes a polynomial of degree at most $k-1$ corresponding to a codeword in  $\rs(\mathcal{W},k)$, 
while $r(x)$ denotes a polynomial of degree at most $n-k-1$ corresponding to a dual codeword from $\grs(\mathcal{W}, n-k, \pmb{\lambda})$. 
Our low-bandwidth schemes rely on the existence of low-degree polynomials $r(x)$ and the following relation
\begin{align}\label{PCE}
	\sum_{i=1}^n \lambda_i f(\omega_i) r(\omega_i) = 0.
\end{align}
In what follows, we refer to polynomials $r(x)$ as {\em parity-check polynomials} for $\mathcal{C}$. Of interest, we employ {\em trace polynomials} and {\em subspace polynomials}. Formally, we describe our evaluation framework in the next subsection.

\subsection{Trace Recovery Framework}

Suppose that $\FF$ is a field extension of $\BB$ of degree $t > 1$.
Throughout this paper, we refer to the elements of $\FF$ as {\em symbols} and
the elements of $\BB$ as {\em sub-symbols}. We also denote the nonzero elements of $\FF$ as $\FF_*$. 
As mentioned in the introduction, a remarkable property of Reed-Solomon codes is the following: for any Reed-Solomon code of dimension $k$, we can reconstruct the entire word $\bc$ by accessing any $k$ symbols in $\bc$. 
This is also known as the {\em maximum distance separable (MDS)} property.

Fix $\ell\ge 1$ and some $\ell$-tuple $\pmb{\kappa}\in \mathbb{F}_*^{\ell}$. 
In this work, we are interested in recovering the weighted sum $\SSS=\sum_{i=1}^\ell \kappa_i$ by accessing some $d\ge k$ available nodes. The classical approach downloads $k$ symbols from any $k$-subset of the $d$ available nodes. Over $\BB$, the number of sub-symbols downloaded is $kt$ and the question we want to address is: can we download less than $kt$ sub-symbols, especially when $d$ is much greater than $k$?

Let us formally define a weighted-sum evaluation scheme for general $\FF$-linear code.
\begin{definition}\label{recscheme}
	Let $\mathbb{F}$ be a degree-$t$ field extension of $\mathbb{B}$.
	Let $\ell$ and $d$ be integers and $\mathcal{C}$ be an $\FF$-linear code of length $\ell+d$. 
	We fix a set of coefficients $\pmb{\kappa}\in\FF_*^\ell$.
	A $(\pmb{\kappa},\mathcal{C})$-{\em weighted-sum evaluation scheme} or shortly $(\pmb{\kappa},\mathcal{C})$-{\em evaluation scheme} is defined to be a collection of $d$ functions
	\begin{align}
		g_j\;:\;\mathbb{F}\to\mathbb{B}^{b_j}\text{ for }j\in[d]
	\end{align}
	such that for all codewords $\pmb{c}\in\mathcal{C}$, the weighted sum $\sum_{i=1}^{\ell}\kappa_ic_i$ can be computed using values in the set $\{g_j(c_{\ell+j})\,:\, j\in[d]\}$. 
	The {\em evaluation bandwidth} $b$ is given by $\sum_{j=1}^db_j$. 
	If the functions $g_j (j\in [d])$ along with the function that determines the weighted sum $\sum_{i=1}^{\ell}\kappa_ic_i$  are all $\mathbb{B}$-linear, we say that the $(\pmb{\kappa},\mathcal{C})$-evaluation scheme is $\mathbb{B}$-linear. 
\end{definition}

In this work, we limit our attention to $\BB$-linear evaluation schemes only.
To this end, we consider the {\em trace function} $\tr : \FF\to \BB$ defined as follows:
\begin{equation}
	\tr(x) = \sum_{i=0}^{t-1}x^{|\mathbb{B}|^i} \text{ for } x\in \FF. \label{eq:trace}
\end{equation}
It is straightforward to see that the trace function is $\mathbb{B}$-linear.
Furthermore, $\tr(x)$ is a polynomial in $x$ with degree $|\BB|^{t-1}$. 
Hence, in this paper, we also refer to $\tr(x)$ as a {\em trace polynomial}.
We also refer the interested reader to Lidl and Niedderreiter \cite[Ch. 2]{Lidl} for an in-depth overview of the properties of the trace function. Actually, one crucial fact is that all $\BB$-linear maps defined over $\FF$ can be described in terms of trace functions.


Thus far, we have a function that maps symbols in $\FF$ to sub-symbols in $\BB$. 
Next, we describe a procedure to recover a symbol in $\FF$ using sub-symbols in $\BB$.
If we view $\mathbb{F}$ as a $\BB$-linear subspace of dimension $t$, we can define a $\BB$-basis $\{u_1,\ldots,u_t\}$ for $\mathbb{F}$. 
Furthermore,  there exists a  {\em trace-dual basis} $\{\widetilde{u}_1,\ldots,\widetilde{u}_t\}$ for $\mathbb{F}$ such that $\tr(u_i\widetilde{u}_j)=1$ if $i=j$, and $\tr(u_i\widetilde{u}_j)=0$, otherwise. The following result plays a crucial role in our evaluation framework.


\begin{proposition}\cite[Ch. 2]{Lidl}\label{prop:trace}
	Let $\{u_1,\ldots,u_t\}$ be a $\BB$-basis of $\mathbb{F}$. Then there exists a trace-dual basis  $\{\widetilde{u}_1,\ldots,\widetilde{u}_t\}$ and we can write each element $x\in\mathbb{F}$ as
	\begin{equation*}
		x=\sum_{i=1}^t\tr(u_ix)\widetilde{u}_i.
	\end{equation*}
\end{proposition}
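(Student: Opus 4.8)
The plan is to recognize the pairing $(a,b)\mapsto\tr(ab)$ as a nondegenerate symmetric $\BB$-bilinear form on $\FF$ and then quote the standard existence of a dual basis with respect to such a form. Write $B(a,b)\triangleq\tr(ab)$. Bilinearity over $\BB$ is immediate from the $\BB$-linearity of $\tr$ noted just before the proposition, and symmetry follows from commutativity of $\FF$. So the only real content is nondegeneracy.

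For nondegeneracy I would show: if $a\in\FF$ satisfies $\tr(ab)=0$ for every $b\in\FF$, then $a=0$. Suppose $a\ne0$. Since $\FF$ is a field, $b\mapsto ab$ is a bijection of $\FF$, so $\tr(ab)=0$ for all $b$ would force $\tr$ to vanish identically on $\FF$. But by \eqref{eq:trace}, $\tr(x)=\sum_{i=0}^{t-1}x^{|\BB|^i}$ is a nonzero polynomial of degree $|\BB|^{t-1}<|\BB|^{t}=|\FF|$, hence it has at most $|\BB|^{t-1}$ roots in $\FF$ and cannot vanish on all of $\FF$ — a contradiction. This step, arguing that the trace is not the zero map, is the one genuinely nontrivial point; the rest is routine linear algebra over $\BB$.

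Given nondegeneracy, the $\BB$-linear map $\FF\to\mathrm{Hom}_{\BB}(\FF,\BB)$ sending $a$ to $B(a,\cdot)$ is injective, hence an isomorphism since both sides have $\BB$-dimension $t$. Let $\{u_1^*,\dots,u_t^*\}$ be the basis of $\mathrm{Hom}_{\BB}(\FF,\BB)$ dual to $\{u_1,\dots,u_t\}$, i.e. $u_i^*(u_j)$ equals $1$ if $i=j$ and $0$ otherwise. Pulling back along the isomorphism yields unique $\widetilde u_1,\dots,\widetilde u_t\in\FF$ with $B(\widetilde u_j,\cdot)=u_j^*$, that is $\tr(u_i\widetilde u_j)=1$ for $i=j$ and $0$ otherwise; and $\{\widetilde u_1,\dots,\widetilde u_t\}$ is a $\BB$-basis, since a family dual to a basis under a nondegenerate pairing is linearly independent and has the right cardinality. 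Finally, for any $x\in\FF$ write $x=\sum_{j=1}^t c_j\widetilde u_j$ with $c_j\in\BB$; applying $\tr(u_i\,\cdot\,)$ and using the $\BB$-linearity of the trace together with $\tr(u_i\widetilde u_j)=\delta_{ij}$ gives $\tr(u_i x)=c_i$, whence $x=\sum_{i=1}^t\tr(u_i x)\widetilde u_i$, as claimed.
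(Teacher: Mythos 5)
The paper does not prove this proposition; it cites it directly from Lidl and Niederreiter \cite[Ch.~2]{Lidl}, so there is no in-paper argument to compare against. Your proof is correct and is essentially the standard textbook argument that the cited reference uses: identify $(a,b)\mapsto\tr(ab)$ as a nondegenerate symmetric $\BB$-bilinear form, prove nondegeneracy by observing that a nonzero trace would have to be a polynomial of degree $|\BB|^{t-1}<|\FF|$ vanishing everywhere, then pull back the dual basis of $\mathrm{Hom}_{\BB}(\FF,\BB)$ through the induced isomorphism $\FF\to\mathrm{Hom}_{\BB}(\FF,\BB)$, and finally extract the coefficients by applying $\tr(u_i\,\cdot\,)$ to a basis expansion. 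All steps are sound; the one point worth keeping as carefully stated as you did is the nondegeneracy argument, since that is where the finiteness of the field and the explicit degree bound on the trace polynomial actually get used.
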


Hence, following \cite{Guruswami}, our strategy to evaluate $\SSS$ is to compute $t$ independent traces $\tr(u_i\SSS)\,(i\in [t])$ by utilizing as little symbols in $\mathbb{B}$ as possible.

\subsection{Related Work}
\label{sec:related}

As mentioned earlier, our problem of interest bears similarities with certain previous work. 
Namely, one can obtain the evaluation of a weighted sum by employing a scheme that either {\em repairs multiple erasures} or {\em evaluates a weighted sum of all data symbols}.
Hence, in this subsection, we review state-of-the-art schemes for these scenarios and reformulate their results in terms of $(\pmb{\kappa},\C)$-evaluation schemes.

\begin{itemize}
\item \textbf{Low-bandwidth repair of multiple erasures}.
Low-bandwidth repair for Reed-Solomon codes was initiated by the pioneering paper of Guruswami and Wootters \cite{Guruswami}. In this work, Guruswami and Wootters introduced the trace repair framework and designed an optimal repair schemes for the case of single erasure. Specifically, they used trace polynomials as parity-check polynomials and formed equations to determine the traces to download.

Later, in place of trace polynomials, Dau and Milenkovic studied the use of subspace polynomials as parity-check polynomials and designed another class of optimal repair schemes  \cite{DM}. This setup was then generalized to the case of two and three erasures in centralized and decentralized repair models in papers \cite{HM1, HM2,Mardia}. 
The trade-off between the sub-packetization and bandwidth for multiple erasures was investigated in~\cite{weiqi, vardy}. 
Herein, we focus on the case of small sub-packetization level (of the order $\log(n)$).

We also do not impose restrictions on the set of evaluation points and can choose them to be any set of specific size. This is noticeably different from related secret-sharing schemes in which each secret is represented by a single field symbol (see, for example,~\cite{Wang} and the references wherein).
Specifically, in~\cite{Wang}, the evaluation points, which correspond to secrets, are required to satisfy certain algebraic properties.

Now, suppose that we have a low-bandwidth scheme that repairs $\ell$ erasures by downloading information from any $d$ available nodes. 
Then for $\pmb{\kappa}\in \FF_*^{\ell}$, we can first use the repair scheme to recover the coded symbols $c_1,c_2,\ldots, c_\ell$ and then compute the weighted sum $\SSS=\sum_{i\in [\ell]} \kappa_i c_i$.
Therefore, we obtain an $(\pmb{\kappa},\C)$-evaluation scheme with the same bandwidth.
In the following theorem, we summarize the state-of-the-art results for low-bandwidth repair of multiple erasures.

\begin{theorem}[{\cite[Theorems 2 and 3]{Mardia}}]\label{th::MBW}
Let  $\B$ and $\A$ be two disjoint subsets of distinct points in $\mathbb{F}$ with $|\B|=\ell$ and $|\A|=d$ and let $\pmb{\kappa}\in\FF_*^{\ell}$. 
For $\ell \le k\le \ell+d$, let $\C$ be the Reed-Solomon code $\rs(\B\cup\A,k)$. 

Suppose that $\FF$ is a field extension of $\BB$ with degree $t$.
\begin{enumerate}[(i)]
\item If $s\le t$ and $d\ge |\BB|^s(2\ell-1)-2\ell+k+1$, there is an $(\pmb{\kappa},\C)$-evaluation scheme with bandwidth $d(t-s)$ subsymbols (or $d(t-s)\left\lceil\log_2|\BB|\right\rceil$ bits) \cite[Theorem~2]{Mardia}.
\item If $t>\binom{\ell}{2}+\log_{|\BB|}(1+\ell(\ell+\binom{\ell}{2}(|\BB|-1))(|\BB|-1))$ and $d\ge|\BB|^{t-1}-\ell+k$, there is an $(\pmb{\kappa},\C)$-evaluation scheme with bandwidth $d\ell - (|\BB|-1)\binom{\ell}{2}$ subsymbols (or $(d\ell - (|\BB|-1)\binom{\ell}{2})\left\lceil\log_2|\BB|\right\rceil$ bits) \cite[Theorem~3]{Mardia}.
\end{enumerate} 
\end{theorem}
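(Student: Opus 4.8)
The plan is to handle both parts within the trace-repair framework, reducing the task to the construction of a suitable family of low-degree parity-check polynomials. Write $n=\ell+d$, put $\B=\{\beta_1,\dots,\beta_\ell\}$ and $\A=\{\alpha_1,\dots,\alpha_d\}$, fix a $\BB$-basis $\{u_1,\dots,u_t\}$ of $\FF$, and let $\lambda$ be the dual multipliers of $\rs(\B\cup\A,k)$ as in \eqref{GRSM}, so that by \eqref{PCE} any polynomial $r$ with $\deg r<n-k$ satisfies $\sum_{i\in[\ell]}\lambda_{\beta_i}f(\beta_i)r(\beta_i)+\sum_{\alpha\in\A}\lambda_{\alpha}f(\alpha)r(\alpha)=0$. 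Suppose we can produce, for each $m\in[t]$, such a polynomial $r_m$ with $\lambda_{\beta_i}r_m(\beta_i)=u_m\kappa_i$ for all $i\in[\ell]$. Then $u_m\SSS=u_m\sum_{i\in[\ell]}\kappa_if(\beta_i)=-\sum_{\alpha\in\A}\lambda_{\alpha}f(\alpha)r_m(\alpha)$, hence $\tr(u_m\SSS)=-\sum_{\alpha\in\A}\tr\!\big(\lambda_{\alpha}r_m(\alpha)f(\alpha)\big)$; since the trace form is nondegenerate, the $t$ sub-symbols $\tr(u_m\SSS)$ recover $\SSS$ via Proposition~\ref{prop:trace}, and every step is $\BB$-linear. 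If $\mu_\alpha$ denotes the dimension over $\BB$ of the span of $\{r_m(\alpha):m\in[t]\}$, then node $\alpha$ need only transmit a $\BB$-basis of the (at most) $\mu_\alpha$ functionals $y\mapsto\tr(\lambda_\alpha r_m(\alpha)y)$ of its symbol $f(\alpha)$, so the evaluation bandwidth is $\sum_{\alpha\in\A}\mu_\alpha$. It remains to build the families $\{r_m\}_{m\in[t]}$ with small $\mu_\alpha$.

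For part (i) the building block is a subspace polynomial. Fix a $\BB$-subspace $W\subseteq\FF$ of dimension $s$ and set $L_W(x)=\prod_{w\in W}(x-w)$, a $\BB$-linearized polynomial of degree $|\BB|^s$ whose image $L_W(\FF)$ has dimension $t-s$ over $\BB$. One interpolates the $\ell$ conditions $\lambda_{\beta_i}r_m(\beta_i)=u_m\kappa_i$ using Lagrange-type factors $\prod_{i'\ne i}(x-\beta_{i'})$ to isolate each erasure position, composing with $L_W$ so that at every node $\alpha\in\A$ the value $r_m(\alpha)$ lies, up to a single common nonzero scalar, in $L_W(\FF)$; this forces $\mu_\alpha\le t-s$ for all $d$ nodes, giving bandwidth $d(t-s)$. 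Accounting for the degree, the factors handling the $\ell$ erasure positions contribute $O(\ell)$ while the composition with $L_W$ scales the ``active'' degree by $|\BB|^s$, and a careful count yields $\deg r_m<n-k$ exactly under the hypothesis $d\ge|\BB|^s(2\ell-1)-2\ell+k+1$; specializing $\ell=1$, $s=t-1$ recovers the Guruswami--Wootters and Dau--Milenkovic parity-check polynomials.

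For part (ii) the building block is a trace polynomial. For a single erasure $\beta_i$, the polynomial $\tr(u_m(x-\beta_i))/(x-\beta_i)$ has degree $|\BB|^{t-1}-1$, equals $u_m$ at $\beta_i$, and at every other point evaluates to a $\BB$-multiple of one fixed element of $\FF$. Combining these over $i\in[\ell]$ with Lagrange factors --- and with enough care to keep the overall degree at $|\BB|^{t-1}-1$, which is where the largeness of $t$ first enters --- produces $r_m$ satisfying $\lambda_{\beta_i}r_m(\beta_i)=u_m\kappa_i$ for all $i$ and with $\{r_m(\alpha):m\in[t]\}$ contained in the $\BB$-span of the $\ell$ fixed elements $\{L_i(\alpha)/(\alpha-\beta_i)\}_{i\in[\ell]}$, whence $\mu_\alpha\le\ell$ and bandwidth $\le d\ell$. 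The additional saving of $(|\BB|-1)\binom{\ell}{2}$ stems from the $\binom{\ell}{2}$ pairwise relations among $\beta_1,\dots,\beta_\ell$: for $t$ large enough --- this is precisely the role of the hypothesis $t>\binom{\ell}{2}+\log_{|\BB|}(1+\ell(\ell+\binom{\ell}{2}(|\BB|-1))(|\BB|-1))$ --- one can pick the basis $\{u_m\}$ and analyze the $\ell$ direction vectors attached to each node so that their $\BB$-span drops below $\ell$ at enough nodes to cut $\sum_{\alpha\in\A}\mu_\alpha$ by exactly $(|\BB|-1)\binom{\ell}{2}$.

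I expect the main obstacle to be the combined degree-and-dimension bookkeeping in part (ii): one must simultaneously keep $\deg r_m\le|\BB|^{t-1}-1$ (as tight as the stated bound on $d$ requires), satisfy all $\ell$ interpolation constraints, and extract the precise $(|\BB|-1)\binom{\ell}{2}$ improvement --- the last being a genuine counting argument over the choice of $\{u_m\}$ and the arithmetic of the extension, and the source of the intricate hypothesis on $t$. By contrast, the framework reduction in the first paragraph and the subspace-polynomial construction of part (i) are routine once the setup is fixed.
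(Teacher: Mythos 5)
The paper does not actually prove this theorem: it is a verbatim restatement of \cite[Theorems~2 and 3]{Mardia} for the repair of $\ell$ erasures, and the entire ``proof'' is the observation in the preceding paragraph of Section~\ref{sec:related} that any scheme repairing all $\ell$ erased symbols with some bandwidth yields a $(\pmb{\kappa},\C)$-evaluation scheme of the same bandwidth by first recovering $c_1,\dots,c_\ell$ and then forming $\sum_i\kappa_i c_i$. Your proposal instead tries to rebuild the constructions from scratch as a \emph{direct} weighted-sum scheme (producing $r_m$ with $\lambda_{\beta_i}r_m(\beta_i)=u_m\kappa_i$ simultaneously for all $i$), which is a genuinely different approach and, in fact, the approach the paper uses for its own Schemes~\ref{scheme:trace} and~\ref{scheme:subspace}, not for this cited theorem.

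Within your approach there are concrete gaps. For part (i), the degree bookkeeping you assert is wrong for the construction you sketch: with $g(x)$ of degree $\ell-1$ and $L_W\bigl(u_m\prod_j(x-\beta_j)\bigr)/\prod_j(x-\beta_j)$ of degree $\ell|\BB|^s-\ell$, the resulting $r_m$ has degree $\ell|\BB|^s-1$, so the requirement is $d\ge \ell|\BB|^s-\ell+k$ (this is exactly Theorem~\ref{thm:scheme}(ii)), not the larger Mardia bound $d\ge|\BB|^s(2\ell-1)-2\ell+k+1$. Your claim that ``a careful count yields $\deg r_m<n-k$ exactly under'' the Mardia hypothesis misidentifies which construction yields which condition: the Mardia condition arises because their scheme recovers each $c_i$ \emph{individually}, requiring a separate family of parity polynomials per erased position with a worse degree budget. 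For part (ii), you explicitly defer the central counting argument that produces the $(|\BB|-1)\binom{\ell}{2}$ savings and justifies the intricate lower bound on $t$; that argument is the whole content of Mardia's Theorem~3 and cannot be left as an ``obstacle.'' If your intent is to prove the theorem as stated, the cleanest correct route is the one the paper takes (cite \cite{Mardia} and invoke the repair-to-evaluation reduction); if your intent is to give a direct construction, you should state and prove the \emph{stronger} bound $d\ge\ell|\BB|^s-\ell+k$ for part (i) and note that it implies (i), but part (ii) still requires reproducing Mardia's counting argument in full.
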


\begin{remark}
	Of interest are two parameter regimes for the repair of MDS codes that depends on the degree of field extension $t$.
	The first one, more commonly studied for regenerating codes, is when $t$ is significantly larger than $n-k$. 
	In this regime, we have the famous cut-set (lower) bound \cite{Dimakis} on repair bandwidth and there are codes attaining it for one or several erasures \cite{Barg}. 
	Another regime, more natural for practical settings, is when $t$ is small compared to $n-k$.
	In this case, the cut-set bound cannot be met, but there exists a matching lower bound for single erasure repair \cite{DM, Guruswami}. We note that deriving matching lower bound for several erasures repair remains open. 
	In our work, we focus on this parameter regime, that is, $t$ is small.
\end{remark}

\item \textbf{Low-bandwidth function evaluation of Reed-Solomon encoded data}.
Recently, Lenz {\em et al.} investigated the problem of designing codes for function evaluation of information symbols transmitted over a noisy channel  \cite{lenz}. In the paper, the authors analyzed the trade-off between channel noise and coding rates for general coding channels. Later, Shutty and Wootters studied such a problem for Reed-Solomon codes and investigated the efficiency on the receiver side \cite{SW}. More formally, they introduced the framework for low-bandwidth recovery of weighted sum of Reed-Solomon encoded data while some of the coded symbols may be erased. 

Now, suppose that we have such a scheme that can tolerate $\ell$ erasures by employing $d$ available nodes. Clearly, for $\pmb{\kappa}\in\FF_*^{\ell}$ the weighted sum $\SSS=\sum_{i\in [\ell]} \kappa_i c_i$ of coded symbols $c_1, c_2,\ldots,c_{\ell}$ can be transformed to the weighted sum of information symbols. Therefore, we obtain an $(\pmb{\kappa},\C)$-evaluation scheme with the same bandwidth. The following theorem formalizes such a result. We defer the explanation on why \cite[Theorem 19]{SW} translates to the form of Theorem~\ref{th::SW} to Appendix~\ref{SWPROOF}.

\begin{theorem}[{\cite{SW}}]\label{th::SW}
Let  $\B$ and $\A$ be two disjoint subsets of distinct points in $\mathbb{F}$ with $|\B|=\ell$ and $|\A|=d$ and let $\pmb{\kappa}\in\FF_*^{\ell}$. 
For $\ell \le k\le \ell+d$, let $\C$ be the Reed-Solomon code $\rs(\B\cup\A,k)$. 

Suppose that $\FF$ is a field extension of $\BB$ with degree $t$.
For any positive number $s$, if $|\mathbb{B}|^{t-1}(s+1)+k+1\leq d\leq |\mathbb{B}|^t-\ell$, then there is a $(\pmb{\kappa},\C)$-evaluation scheme with bandwidth $d|\BB|/s$ subsymbols (or $(d|\BB|/s)\left\lceil\log_2|\BB|\right\rceil$ bits).
\end{theorem}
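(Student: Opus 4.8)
The plan is to instantiate the trace-recovery framework of Section~\ref{sec:prelim}, following the construction of Shutty and Wootters~\cite{SW}. Write $\B=\{\omega_1,\dots,\omega_\ell\}$ and $\A=\{\omega_{\ell+1},\dots,\omega_{\ell+d}\}$, so that a codeword of $\C=\rs(\B\cup\A,k)$ has the form $(f(\omega_1),\dots,f(\omega_{\ell+d}))$ with $\deg f<k$, the target is $\SSS=\sum_{i=1}^\ell\kappa_i f(\omega_i)$, and, in the notation of Definition~\ref{recscheme}, node $j$ holds $c_{\ell+j}=f(\omega_{\ell+j})$. Fix a $\BB$-basis $\{u_1,\dots,u_t\}$ of $\FF$ with trace-dual basis $\{\widetilde u_1,\dots,\widetilde u_t\}$. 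Since $\SSS=\sum_{m=1}^t\tr(u_m\SSS)\widetilde u_m$ by Proposition~\ref{prop:trace} and this reconstruction is $\BB$-linear, it suffices to produce the $t$ sub-symbols $\tr(u_m\SSS)$, $m\in[t]$, from the download.

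The first step is to reduce this to the existence of suitable \emph{parity-check polynomials}. The dual of $\C$ is $\grs(\B\cup\A,\ell+d-k,\pmb\lambda)$ with $\lambda_i$ as in~\eqref{GRSM}, so~\eqref{PCE} holds for every $r\in\FF[x]$ with $\deg r\le\ell+d-k-1$. If for each $m\in[t]$ one can exhibit such an $r_m$ with $\lambda_i\,r_m(\omega_i)=\kappa_i u_m$ for all $i\in[\ell]$ (a known nonzero scalar multiple of $u_m$ would do just as well, after passing to the scaled basis), then applying $\tr$ together with its $\BB$-linearity and~\eqref{PCE} yields
\[
\tr(u_m\SSS)=\sum_{i=1}^\ell\tr\!\big(\lambda_i r_m(\omega_i)f(\omega_i)\big)=-\sum_{j=1}^d\tr\!\big(\lambda_{\ell+j}\,r_m(\omega_{\ell+j})\,c_{\ell+j}\big).
\]
Hence if node $j$ reports $\tr(w\,c_{\ell+j})$ for $w$ running over a $\BB$-basis of $V_j:=\mathrm{span}_\BB\{\lambda_{\ell+j}r_m(\omega_{\ell+j}):m\in[t]\}$, all $t$ target traces can be assembled; the resulting scheme is $\BB$-linear, and its bandwidth is $\sum_{j=1}^d\dim_\BB V_j$. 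It thus remains to choose the $r_m$ so that $\sum_{j=1}^d\dim_\BB V_j\le d|\BB|/s$.

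Constructing these $r_m$ is the core of the argument, and where I expect the real difficulty to be. The degree ceiling is $\ell+d-k-1$; within it one must accommodate the $\ell$ interpolation conditions at $\B$ and, for each $m$, a trace-type polynomial in $u_m$ whose presence is what forces $\dim_\BB V_j$ below the naive value $t$. A single trace polynomial has degree $|\BB|^{t-1}$, and the hypothesis $d\ge|\BB|^{t-1}(s+1)+k+1$ is exactly what leaves room for $s+1$ such ``copies'' under the ceiling (one to pin down the target direction, $s$ more to refine the parity-check polynomials); following~\cite{SW}, these $s$ extra copies are used to confine $\lambda_{\ell+j}r_m(\omega_{\ell+j})$, as $m$ varies, to a $\BB$-subspace of dimension about $|\BB|/s$ at the typical node, giving the claimed total $d|\BB|/s$. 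The obstacle is to verify that this span bound holds \emph{simultaneously at all $d$ nodes} while respecting the degree ceiling and the interpolation conditions at $\B$ --- this three-way balance is precisely why the threshold on $d$ takes the form $|\BB|^{t-1}(s+1)+k+1$ and why the bandwidth worsens as $s$ decreases. The upper bound $d\le|\BB|^t-\ell$ is only the requirement $|\B\cup\A|\le|\FF|$ needed for the code to be defined.
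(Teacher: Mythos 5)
The paper does not contain a proof of this statement: Theorem~\ref{th::SW} is stated as a reformulation of a result of Shutty and Wootters and is attributed entirely to~\cite{SW}, so there is no in-paper argument for you to have matched.

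As for the proposal itself, there is a genuine gap. The reduction you set up is correct and is exactly the framework behind Schemes~\ref{scheme:trace} and~\ref{scheme:subspace}: reconstruct $\SSS$ from the traces $\tr(u_m\SSS)$ via Proposition~\ref{prop:trace}, use parity-check polynomials $r_m$ of degree at most $\ell+d-k-1$ satisfying $\lambda_i r_m(\omega_i)=\kappa_i u_m$ on $\B$, and bound the bandwidth by $\sum_{j}\dim_\BB V_j$ where $V_j=\mathrm{span}_\BB\{\lambda_{\ell+j}r_m(\omega_{\ell+j}):m\in[t]\}$. But everything that makes the theorem nontrivial is then deferred: you never construct the $r_m$, never bound the node-wise spans $V_j$, and so never verify the claimed total $d|\BB|/s$. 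The ``room for $s+1$ copies of the trace polynomial under the degree ceiling'' heuristic explains why the threshold $d\ge|\BB|^{t-1}(s+1)+k+1$ is plausible, but it is not a construction; moreover $s$ is an arbitrary positive real, so ``$s+1$ copies'' is not meaningful and $|\BB|/s$ is generally not an integer, which means the per-node bound cannot hold pointwise and must come from an averaging argument that you have not set up. Finally, the direction of your argument (fix parity-check polynomials, then read off the induced spans) is the Guruswami--Wootters pattern used in this paper's own Schemes~\ref{scheme:trace} and~\ref{scheme:subspace}; Shutty and Wootters proceed in the opposite order, choosing the low-dimensional subspaces $V_j$ first and then exhibiting compatible dual codewords. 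So even as an outline, the proposal neither completes the argument nor tracks the one behind the cited result.
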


\item {\bf Communication-Efficient Secret Sharing Schemes.} 
	In secret sharing, shares of a secret $S$ is distributed to $d$ participants 
	so that the secret can be recovered with a certain amount $k$ of shares. 
	Secret sharing scheme was first proposed independently by Shamir \cite{shamir} and Blakley \cite{blakleyscheme}. 
	Shamir's scheme~\cite{shamir} is essentially a Reed-Solomon code.
	Here, a secret polynomial of degree at most $k-1$ whose constant term is $S$ is chosen.
	Then the polynomial of degree is evaluated at $d$ distinct nonzero points and distributed to the participants.
	Using polynomial properties, we see that $k$ participants can recover $S$ via interpolation, 
	while any $k-1$ participants cannot obtain any information on $S$.
	In \cite{blakley}, this notion of {\em ramp schemes} was introduced. 
	Here, we have two thresholds are $k$ and $k'$: any $k$ participants can recover the secret $S$, 
	while any subset of at most $k'$ participants obtain no information about $S$.
	Many research has been done to improve ramp scheme, for instance, to construct a communication efficient ramp scheme (see \cite{cess1, cess2,cess3}),
	or to prevent partial information from being recovered explicitly (see \cite{ssss1, ssss2, ssss3,ssss4}). 
	
	Another extension is known as \textit{Homomorphic Secret Sharing}, 
	which aims to evaluate a function $F$ of the secrets $s_1,\ldots,s_N$ from $N$ clients \cite{boyle16,boyle18}. 
	Similar to \cite{SW}, we can employ our schemes for single-client HSS (when $N=1$) with secret $S = \{c_1,\ldots,c_\ell\}$ and $F(\pmb{\kappa},S) = \sum_{i\in[\ell]}\kappa_i c_i$.
	Specifically, we can set ${\cal B} = \{\beta_1,\ldots,\beta_\ell\}$ and ${\cal A}=\{\alpha_1,\ldots,\alpha_d\}$ be two disjoint subsets of distinct points in $\FF$. 
	The shares are computed by using a polynomial $f$ of degree $k-1$ satisfying $f(\beta_i) = c_i$, for all $i\in[\ell]$, 
	and we distribute $f(\alpha_1),\ldots,f(\alpha_d)$ to the $d$ participants. 
	Hence, the secret together with the shares form a codeword of $\rs({\cal B}\cup{\cal A},k)$ and we can apply our result to compute $F(\pmb{\kappa},s)$. The important difference is that in HSS, the location of codesymbols for which we want to find the weighted sum is fixed, but our results have no such restriction.

\end{itemize}

\subsection{Results Overview}

As mentioned earlier, our first result is the explicit construction of two low-bandwidth weighted sum evaluation schemes. 
The schemes are formally described in Section~\ref{sec:evaluation} and their properties are summarized in the following theorem.
\begin{theorem}\label{thm:scheme}
Let  $\B$ and $\A$ be two disjoint subsets of distinct points in $\mathbb{F}$ with $|\B|=\ell$ and $|\A|=d$ and let $\pmb{\kappa}\in\FF_*^{\ell}$. 
For $\ell \le k\le \ell+d$, let $\C$ be the Reed-Solomon code $\rs(\B\cup\A,k)$. 
\begin{enumerate}[(i)]
\item Suppose $\FF$ is a field extension of $\BB$ with degree $t$. 
If $d\geq \ell|\mathbb{B}|^{t-1}-\ell+k$, then Scheme~\ref{scheme:trace} is a $(\pmb{\kappa},\mathcal{C})$-weighted sum evaluation scheme with bandwidth $d$ subsymbols (or $d\left\lceil \log_2|\BB|\right\rceil$ bits).
\item Suppose further $W$ is a $\BB$-linear subspace of $\FF$ with dimension $s$.
If $d\geq \ell|\mathbb{B}|^{s}-\ell+k$, then Scheme~\ref{scheme:subspace} is a  $(\pmb{\kappa},\mathcal{C})$-weighted sum evaluation scheme with bandwidth $d(t-s)$ subsymbols (or $d(t-s)\left\lceil \log_2|\BB|\right\rceil$ bits).	
\end{enumerate}
\end{theorem}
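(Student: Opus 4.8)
The plan is to run a Guruswami--Wootters-style trace argument, with the $\ell$ target positions handled simultaneously. By Proposition~\ref{prop:trace} it suffices to recover the $t$ subsymbols $\tr(u_m\SSS)\in\BB$ for a fixed $\BB$-basis $\{u_1,\dots,u_t\}$ of $\FF$. So for each $m\in[t]$ I will construct a parity-check polynomial $r_m\in\FF[x]$ for $\C$, i.e.\ one with $\deg r_m\le n-k-1=\ell+d-k-1$, whose evaluations on the two blocks are tightly controlled: on $\B$, I want $\lambda_{\beta_i}r_m(\beta_i)=\kappa_i u_m$ for all $i\in[\ell]$; on $\A$, I want all the field elements $\lambda_{\alpha_j}r_m(\alpha_j)$ (as $m$ ranges over $[t]$) to lie in a single $\BB$-subspace $\zeta_j U_j$ of $\FF$ that is independent of $m$. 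Given such $r_m$, splitting the parity-check identity~\eqref{PCE} into its $\B$- and $\A$-parts and using the prescribed values on $\B$ collapses the $\B$-part to $u_m\SSS$, so $u_m\SSS=-\sum_{j=1}^d\lambda_{\alpha_j}r_m(\alpha_j)f(\alpha_j)$; applying the $\BB$-linear trace gives $\tr(u_m\SSS)=-\sum_{j=1}^d\tr(\lambda_{\alpha_j}r_m(\alpha_j)f(\alpha_j))$, and the subspace structure on $\A$ is exactly what lets node $\alpha_j$ transmit only $\dim_\BB U_j$ subsymbols.

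For Scheme~\ref{scheme:trace} I will take $r_m(x)=\tr(u_m\,y_m(x))/\pi(x)$, where $\pi(x)=\prod_{i=1}^{\ell}(x-\beta_i)$ and $y_m\in\FF[x]$ has degree at most $\ell$. Since $\tr(z)=\sum_{p=0}^{t-1}z^{|\BB|^p}$ is linearized and $\pi$ is squarefree, $r_m$ is a genuine polynomial precisely when $\tr(u_m y_m(\beta_i))=0$ for every $i$; and because the formal derivative of $z^{|\BB|^p}$ vanishes for $p\ge1$ in characteristic dividing $|\BB|$, one gets $\tfrac{d}{dx}\tr(u_m y_m(x))=u_m y_m'(x)$, hence $\lambda_{\beta_i}r_m(\beta_i)=\lambda_{\beta_i}u_m y_m'(\beta_i)/\prod_{i'\ne i}(\beta_i-\beta_{i'})$. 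Thus the conditions on $r_m|_{\B}$ become an interpolation problem for $y_m$: the $\FF$-valued derivative constraints $y_m'(\beta_i)=\kappa_i\prod_{i'\ne i}(\beta_i-\beta_{i'})/\lambda_{\beta_i}$ (which are nonzero, so $\tr(u_m y_m)$ has a simple zero at each $\beta_i$, validating the L'Hopital step above) together with the $\BB$-valued annihilation constraints $\tr(u_m y_m(\beta_i))=0$. I will argue this system is solvable with $\deg y_m\le\ell$; then $\deg\tr(u_m y_m(x))\le\ell|\BB|^{t-1}$, so $\deg r_m\le\ell|\BB|^{t-1}-\ell\le\ell+d-k-1$ under the hypothesis $d\ge\ell|\BB|^{t-1}-\ell+k$. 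Finally, for $\alpha_j\in\A$ the numerator $\tr(u_m y_m(\alpha_j))$ is a single element of $\BB$, so $\lambda_{\alpha_j}r_m(\alpha_j)=\zeta_j\,\tr(u_m y_m(\alpha_j))$ with $\zeta_j=\lambda_{\alpha_j}/\pi(\alpha_j)$ independent of $m$; node $\alpha_j$ then reports the single subsymbol $g_j(f(\alpha_j))=\tr(\zeta_j f(\alpha_j))$, and $\tr(u_m\SSS)=-\sum_{j=1}^d\tr(u_m y_m(\alpha_j))\cdot\tr(\zeta_j f(\alpha_j))$ recovers all $t$ target subsymbols from the $d$ downloaded ones together with the precomputable coefficients $\tr(u_m y_m(\alpha_j))$; Proposition~\ref{prop:trace} then yields $\SSS$, with bandwidth $d$.

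For Scheme~\ref{scheme:subspace} I run the same construction with $\tr$ replaced by the subspace polynomial $L_{\mathcal W}(x)=\prod_{w\in\mathcal W}(x-w)$, which is $\BB$-linearized of degree $|\BB|^s$, has kernel $\mathcal W$, and has image $U:=L_{\mathcal W}(\FF)$ of $\BB$-dimension $t-s$. Taking $r_m(x)=L_{\mathcal W}(u_m\,y_m(x))/\pi(x)$ with $\deg y_m\le\ell$, the same derivative computation (the linear coefficient of $L_{\mathcal W}$ being nonzero) reduces the conditions on $r_m|_{\B}$ to: $u_m y_m(\beta_i)\in\mathcal W$ for all $i$, and $y_m'(\beta_i)$ equal to a prescribed nonzero element. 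Now $\deg L_{\mathcal W}(u_m y_m(x))\le\ell|\BB|^s$, so $\deg r_m\le\ell|\BB|^s-\ell\le\ell+d-k-1$ whenever $d\ge\ell|\BB|^s-\ell+k$, and for $\alpha_j\in\A$ we get $\lambda_{\alpha_j}r_m(\alpha_j)=\zeta_j\,L_{\mathcal W}(u_m y_m(\alpha_j))\in\zeta_j U$, a fixed $(t-s)$-dimensional $\BB$-subspace. Fixing a $\BB$-basis $e_1,\dots,e_{t-s}$ of $U$, node $\alpha_j$ reports the $t-s$ subsymbols $\{\tr(\zeta_j e_q f(\alpha_j))\}_{q\in[t-s]}$; expanding $L_{\mathcal W}(u_m y_m(\alpha_j))$ in $\{e_q\}$ and using $\BB$-linearity of $\tr$ recovers every $\tr(u_m\SSS)$, hence $\SSS$, with bandwidth $d(t-s)$. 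In particular Scheme~\ref{scheme:trace} is the special case $\mathcal W=\ker\tr$, $s=t-1$.

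The step I expect to be hardest is the construction itself: producing the auxiliary polynomials $y_m$ (explicitly, as the title promises) and proving the two interpolation conditions on them are simultaneously satisfiable for every admissible $\B$, $\pmb\kappa$, basis $\{u_m\}$, and subspace $\mathcal W$ — the degree budget $\deg y_m\le\ell$ leaves little slack, and one must show the $\ell$ $\BB$-annihilation (resp.\ subspace-membership) constraints are compatible with the $\ell$ $\FF$-valued derivative constraints. Also nontrivial is the exact degree bookkeeping that converts the rough estimate $\deg r_m=O(\ell|\BB|^s)$ into the precise threshold $d\ge\ell|\BB|^s-\ell+k$, together with the check that the reconstruction of $\SSS$ uses only quantities depending on the code parameters $(\B,\A,\pmb\kappa,\mathcal W)$ and not on the unknown codeword, so that the subsymbols downloaded from $\A$ genuinely suffice.
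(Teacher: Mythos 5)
Your overall strategy (Guruswami--Wootters trace framework, parity-check polynomials built from $\tr$ or $L_{\cal W}$, handling all of $\B$ at once) matches the paper's, and your degree bookkeeping, L'H\^opital-style derivative computation, and download/reconstruction steps are all sound in isolation. But the specific shape you give the parity-check polynomial introduces a genuine gap that the paper avoids by a small but crucial change of form.

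You take $r_m(x)=\tr(u_m\,y_m(x))/\pi(x)$ with $\deg y_m\le\ell$, and your correctness hinges on solving the interpolation system $\tr(u_m y_m(\beta_i))=0$ (an $\ell$-dimensional system over $\BB$) together with $y_m'(\beta_i)$ prescribed in $\FF$ (an $\ell t$-dimensional system over $\BB$). A polynomial of degree $\le\ell$ has only $(\ell+1)t$ $\BB$-degrees of freedom, so the system is over-determined as soon as $(\ell+1)t<\ell+\ell t$, i.e.\ whenever $\ell>t$; in that regime a solution $y_m$ will generically not exist. The theorem carries no hypothesis of the form $\ell\le t$ — indeed $\ell$ can be as large as $k$ — so this step cannot be repaired by dimension counting alone, and you flag it yourself as the part you expect to be hardest.

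The paper sidesteps this entirely by factoring the parity-check polynomial as a product of an external Lagrange interpolant and an internal trace quotient: in your notation, $r_m(x)=g(x)\cdot\dfrac{\tr\!\left(u_m\,\pi(x)\right)}{\pi(x)}$ (and the analogue with $L_{\cal W}$ for Scheme~2), where $g(x)=\sum_{j=1}^\ell \dfrac{\kappa_j}{\lambda_j}\prod_{k\ne j}\dfrac{x-\beta_k}{\beta_j-\beta_k}$ has degree $\ell-1$. Because the argument of $\tr$ is $u_m\pi(x)$ and $\pi(\beta_i)=0$, the annihilation condition $\tr(u_m\pi(\beta_i))=0$ holds automatically (no interpolation to solve), the quotient $\tr(u_m\pi(x))/\pi(x)$ is already a polynomial taking the constant value $u_m$ at each $\beta_i$, and the separate factor $g(x)$ is what pins $\lambda_{\beta_i}r_m(\beta_i)=\kappa_i u_m$ — again with no system to solve, just explicit Lagrange interpolation. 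The resulting degree is $\ell|\BB|^{t-1}-1$ (rather than your $\ell|\BB|^{t-1}-\ell$), which still fits under $\ell+d-k-1$ exactly when $d\ge\ell|\BB|^{t-1}-\ell+k$ — this is where the threshold in the theorem comes from. The per-node download is unchanged since $g(\alpha_j)/\pi(\alpha_j)$ is independent of $m$. Everything else in your write-up (the subspace-polynomial variant, reporting traces against a basis of $L_{\cal W}(\FF)$, final reassembly via the trace-dual basis) then goes through verbatim. So the fix is local: move the weight polynomial $g(x)$ outside the trace rather than folding it into $y_m$, and the unsolvable interpolation disappears.
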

\begin{remark}
    We provide some asymptotic estimates of the bandwidth for the scheme in Theorem~\ref{thm:scheme} and make comparisons with the schemes in Theorems~\ref{th::MBW}~and~\ref{th::SW}.
	In Section~\ref{sec:numerical}, we also provide numerical values of the bandwidths for certain fixed parameters. \textcolor{black}{Let us fix the base field $\BB$ and let the codelength $n$ grow for some constant code rate $k=Rn$.}
	
	\textcolor{black}{First,} let $\ell = o(n^R)$. Then the bandwidth incurred by the schemes in Theorem~\ref{thm:scheme} is approximately $\log_{|\BB|} \frac{\ell}{1-R}$ subsymbols for each available node.
	In contrast, the bandwidth of the scheme in Theorem~\ref{th::MBW}(i) is approximately $\log_{|\BB|}{\frac{2\ell}{1-R}}$ subsymbols for each available node, while the bandwidth of the scheme in Theorem~\ref{th::MBW}(ii) is approximately $\ell\log_{|\BB|}{\frac{1}{1-R}}$  subsymbols for each available node. In other words, the schemes in our work incurs slightly less bandwidth than the scheme in Theorem~\ref{th::MBW}(i), and 
	saves a factor of $\ell$ as compared to the scheme in Theorem~\ref{th::MBW}(ii). In contrast, the scheme in Theorem~\ref{th::SW} necessarily incurs ${\frac{C}{1-R}}$ subsymbols for each available node for some constant $C\ge 1$. Therefore, \textcolor{black}{in this regime, our schemes provide significantly better bandwidth values.} 
	
	\textcolor{black}{Next, we consider another regime with $\ell=\Theta(n)$. Then the total bandwidth of schemes in Theorem~\ref{th::MBW}(i), (ii) and  Theorem~\ref{thm:scheme} becomes equal to $kt$ and thus trivial. In contrast, the scheme in Theorem~\ref{th::SW} incurs ${\frac{C_1}{C_2-R}}$ subsymbols for each available node for some non-negative constants $C_1\ge 1$ and $C_2<1$ and the total bandwidth can be less than $kt$. Therefore, in this regime, our schemes lose to scheme in Theorem~\ref{th::SW}.}
\end{remark}

Our next result is a lower bound on the evaluation bandwidth of weighted-sum schemes involving Reed-Solomon codes.
\begin{theorem}\label{thm:lowerbound}
Let  $\B$ and $\A$ be two disjoint subsets of distinct points in $\mathbb{F}$ with $|\B|=\ell$ and $|\A|=d$. For $\ell \le k\le \ell+d$, let $\C$ be the Reed-Solomon code $\rs(\B\cup\A,k)$.
Then for any $\pmb{\kappa}\in\FF_*^{\ell}$, we have that the evaluation bandwidth is at least $b_{\min}$ subsymbols where 
\begin{equation}\label{eq:bmin}
	b_{\text{min}} = n_0 \left\lfloor \log_{|\mathbb{B}|} \frac{d}{L}\right\rfloor + (d-n_0)  \left\lceil \log_{|\mathbb{B}|} \frac{d}{L} \right\rceil,
\end{equation}
and 
\begin{align}
	n_0 & = \left\lfloor \frac{L-d|\mathbb{B}|^{-\left\lceil \log_{|\mathbb{B}|} \frac{d}{L} \right\rceil}}{|\mathbb{B}|^{-\left\lfloor \log_{|\mathbb{B}|} \frac{d}{L}\right\rfloor} - |\mathbb{B}|^{-\left\lceil \log_{|\mathbb{B}|} \frac{d}{L}\right\rceil}}\right\rfloor\,, \label{eq:n0}\\
	L & =\frac{1}{|\mathbb{F}|}\Big((|\mathbb{F}|-1)(\ell+d-k-1)+ d\Big)\,. \label{eq:L}
\end{align}
\end{theorem}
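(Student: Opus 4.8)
The plan is to first characterize $\BB$-linear weighted-sum evaluation schemes for $\C$ in terms of low-degree parity-check polynomials, and then convert that characterization into a counting inequality that exploits the fact that $\C^{\perp}=\grs(\B\cup\A,\ell+d-k,\pmb{\lambda})$ is MDS. Write $n=\ell+d$; recall from \eqref{GRSM} that $\lambda_i\neq0$ for all $i$, and that the map sending $r\in\FF[x]$ with $\deg r<n-k$ to the codeword $(\lambda_i r(\omega_i))_i\in\C^{\perp}$ is a bijection, where $\omega_i$ is the $i$-th element of $\B\cup\A$ (so $\omega_i=\beta_i$ for $i\le\ell$ and $\omega_{\ell+j}=\alpha_j$ for $j\in[d]$). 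Since $\C$ is $\FF$-linear and every $\BB$-linear map $\FF\to\BB$ has the form $\tr(\nu\,\cdot)$, a $\BB$-linear functional $\pmb{c}\mapsto\sum_i\tr(\nu_ic_i)$ vanishes on $\C$ exactly when $\pmb{\nu}\in\C^{\perp}$; hence the $\BB$-linear functionals vanishing on $\C$ are in bijection with polynomials of degree below $n-k$.

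\emph{Step 1 (characterization).} A $\BB$-linear scheme is described by $\BB$-subspaces $V_1,\dots,V_d\subseteq\FF$, where node $j$ reveals $\{\tr(v\,c_{\ell+j}):v\in V_j\}$ and contributes $b_j\ge\dim_\BB V_j$ to the bandwidth. Because reconstruction is $\BB$-linear, the scheme recovers $\SSS=\sum_{i\le\ell}\kappa_ic_i$ if and only if for every $w\in\FF$ the functional $\pmb{c}\mapsto\tr(w\SSS)=\sum_{i\le\ell}\tr(w\kappa_ic_i)$ is a $\BB$-linear combination of the revealed coordinate functionals; by the identification above this is equivalent to the existence of $r_w\in\FF[x]$ with $\deg r_w<n-k$, $\lambda_ir_w(\beta_i)=w\kappa_i$ for $i\in[\ell]$, and $\lambda_{\ell+j}r_w(\alpha_j)\in V_j$ for $j\in[d]$. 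The set of such ``good'' $w$ is a $\BB$-subspace of $\FF$ (take sums and $\BB$-multiples of the $r_w$'s), so a valid scheme forces it to equal $\FF$; fixing good $r_w$ on a $\BB$-basis of $\FF$ and extending $\BB$-linearly yields a single $\BB$-linear map $\rho$ from $\FF$ to the space of polynomials of degree below $n-k$ such that $\lambda_i\rho(w)(\beta_i)=w\kappa_i$ for all $i\in[\ell]$ and $\lambda_{\ell+j}\rho(w)(\alpha_j)\in V_j$ for all $j\in[d]$.

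\emph{Step 2 (counting in the MDS dual).} For $w\neq0$ the polynomial $\rho(w)$ is nonzero, since $\lambda_1\rho(w)(\beta_1)=w\kappa_1\neq0$, and has degree at most $n-k-1$, hence at most $n-k-1$ roots; since none of $\beta_1,\dots,\beta_\ell$ is a root, at least $d-(n-k-1)=k-\ell+1$ of the points $\alpha_1,\dots,\alpha_d$ satisfy $\rho(w)(\alpha_j)\neq0$. Set $\rho_j(w)=\lambda_{\ell+j}\rho(w)(\alpha_j)$, a $\BB$-linear map $\FF\to\FF$ with image inside $V_j$ and with $\rho_j(w)\neq0\iff\rho(w)(\alpha_j)\neq0$. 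Summing $\#\{j:\rho_j(w)\neq0\}\ge k-\ell+1$ over all $w\neq0$, interchanging the order of summation, and using $\#\{w\in\FF:\rho_j(w)\neq0\}=|\FF|\bigl(1-|\BB|^{-\rank\rho_j}\bigr)$, we obtain
\[
\sum_{j=1}^d|\FF|\bigl(1-|\BB|^{-\rank\rho_j}\bigr)\ \ge\ (|\FF|-1)(k-\ell+1).
\]
Dividing by $|\FF|$ and rearranging gives $\sum_{j=1}^d|\BB|^{-\rank\rho_j}\le (n-k-1)+\tfrac{k-\ell+1}{|\FF|}$, which equals $L$ of \eqref{eq:L}. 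Since $b_j\ge\dim_\BB V_j\ge\rank\rho_j$, this forces $\sum_{j=1}^d|\BB|^{-b_j}\le L$.

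\emph{Step 3 (discrete optimization).} It remains to show that nonnegative integers $b_1,\dots,b_d$ with $\sum_j|\BB|^{-b_j}\le L$ satisfy $\sum_j b_j\ge b_{\min}$. As $x\mapsto|\BB|^{-x}$ is convex and decreasing, replacing a pair $(b_i,b_j)$ with $b_i\le b_j-2$ by $(b_i+1,b_j-1)$ keeps $\sum_j b_j$ unchanged while not increasing $\sum_j|\BB|^{-b_j}$; hence a minimizer may be taken with all $b_j\in\{\lfloor\log_{|\BB|}(d/L)\rfloor,\lceil\log_{|\BB|}(d/L)\rceil\}$, and then minimality of $\sum_j b_j$ makes the number of coordinates equal to the smaller value as large as the constraint allows, which is exactly $n_0$ of \eqref{eq:n0}; substituting yields $b_{\min}$ of \eqref{eq:bmin}. (If $k>d$, no $d$ of the symbols determine $\SSS$, so no valid scheme exists and the bound holds vacuously; when $k\le d$ one checks $\tfrac{d}{|\FF|}\le L<d$, so $\log_{|\BB|}(d/L)\in(0,t]$ and all quantities are well-defined.) The main obstacle is Step 1: pinning down exactly which tuples $(V_1,\dots,V_d)$ underlie \emph{valid} $\BB$-linear schemes, and in particular showing that validity may be tested one $w$ at a time so that the parity-check data assembles into a single $\BB$-linear map $\rho$, is where the techniques of \cite{Mardia} and \cite{weiqi} must be adapted; Steps 2 and 3 are then a degree/weight count in $\C^{\perp}$ followed by a routine convexity argument, and specializing to $\ell=1$, $d=n-1$ should recover the single-erasure bounds of \cite{Guruswami,DM}.
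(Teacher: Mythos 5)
Your proposal is correct and follows essentially the same route as the paper: Step 1 is the paper's Theorem~\ref{thm:characterization} plus Lemma~\ref{lem:poly} (you phrase it via the $\BB$-linear map $\rho$ into polynomials of degree below $n-k$ rather than the evaluation matrix $\pmb{M}$, but these are the same data); Step 2 is the paper's Lemmas~\ref{lem:Omega-1} and~\ref{lem:Omega-2} with the double-counting of the quantity $\Omega$ folded into a single interchange of summations (your non-root count is the complementary view of the paper's root count, and the constant $(n-k-1)+\tfrac{k-\ell+1}{|\FF|}$ indeed equals $L$); and Step 3 is the same integer optimization appearing in~\eqref{optimization}. Your closing caution about Step~1 is the only real work left to write out in full, and it corresponds precisely to the appendix proof of Theorem~\ref{thm:characterization}.
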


It is clear that any $\ell\le k$ code symbols of the Reed-Solomon code are linearly independent as vectors over $\BB$, hence we cannot recover $\ell$ erased code symbols with bandwidth smaller than those for evaluation of their weighted sum. As a result, we can formulate the following corollary. 

\begin{corollary}
Let  $\B$ and $\A$ be two disjoint subsets of distinct points in $\mathbb{F}$ with $|\B|=\ell$ and $|\A|=d$. For $\ell \le k\le \ell+d$, let $\C$ be the Reed-Solomon code $\rs(\B\cup\A,k)$.
Then for erased symbols $c_1, c_2,\ldots, c_{\ell}$ we have that the recovery bandwidth is at least $b_{\min}$ subsymbols, as defined by~\eqref{eq:bmin},~\eqref{eq:n0} and~\eqref{eq:L}.
\end{corollary}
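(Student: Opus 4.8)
The plan is to derive the corollary from Theorem~\ref{thm:lowerbound} by a simple reduction: any scheme that recovers the erased symbols $c_1,c_2,\ldots,c_\ell$ is, in particular, a scheme that evaluates an arbitrary weighted sum of them with no greater bandwidth, so the lower bound for weighted-sum evaluation carries over verbatim.

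Concretely, I would start from an arbitrary $\BB$-linear recovery scheme for $c_1,\ldots,c_\ell$ that downloads $b_j$ sub-symbols from the $j$-th available node via $\BB$-linear functions $g_j:\FF\to\BB^{b_j}$, $j\in[d]$, together with $\BB$-linear reconstruction maps producing each $c_i$ from $\{g_j(c_{\ell+j}):j\in[d]\}$. Fix any $\pmb{\kappa}\in\FF_*^{\ell}$. Multiplication by a fixed scalar $\kappa_i\in\FF$ is a $\BB$-linear endomorphism of $\FF$, hence $(c_1,\ldots,c_\ell)\mapsto\sum_{i=1}^{\ell}\kappa_i c_i$ is $\BB$-linear; composing the reconstruction maps with this combination yields a $(\pmb{\kappa},\C)$-weighted-sum evaluation scheme in the sense of Definition~\ref{recscheme}, using the \emph{same} download functions $g_j$ and therefore the same bandwidth $b=\sum_{j=1}^{d}b_j$. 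Applying Theorem~\ref{thm:lowerbound} to this evaluation scheme gives $b\ge b_{\min}$ with $b_{\min}$ as in~\eqref{eq:bmin},~\eqref{eq:n0},~\eqref{eq:L}; since the recovery scheme was arbitrary, the claim follows.

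The one point to state carefully — and the only place a skeptical reader might object — is that the reduction incurs no hidden cost: the weighted sum is formed \emph{locally} from data already downloaded, so the bandwidth is unchanged, and the resulting object genuinely satisfies Definition~\ref{recscheme} because both the $g_j$ and the final combining map are $\BB$-linear. I would also note, in passing, that because $k\ge\ell$ and $\rs(\B\cup\A,k)$ is MDS, the first $\ell$ coordinate functionals are $\BB$-linearly independent on the message space, so $\SSS=\sum_{i=1}^{\ell}\kappa_i c_i$ is a non-degenerate linear function of the codeword and Theorem~\ref{thm:lowerbound} indeed applies. Beyond these observations there is no real obstacle; everything reduces to Theorem~\ref{thm:lowerbound}.
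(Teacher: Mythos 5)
Your proposal is correct and follows essentially the same route as the paper. The paper handles this corollary in one sentence preceding its statement: since recovering $c_1,\ldots,c_\ell$ suffices to compute any weighted sum of them locally (at zero additional communication cost), a recovery scheme yields a $(\pmb{\kappa},\C)$-weighted-sum evaluation scheme of equal bandwidth, and Theorem~\ref{thm:lowerbound} applies. Your write-up is a more careful unpacking of that same reduction, including the observations that the combining step is $\BB$-linear and local, and that the MDS property guarantees the first $\ell$ coordinates are independent (matching the paper's remark about linear independence over $\BB$).
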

\begin{remark}\label{rem:l=1}
\textcolor{black}{It is clear that for $\ell=1$, results of Theorems~\ref{thm:scheme} and \ref{thm:lowerbound} coincide with results for one coded symbol recovery from~\cite{Guruswami, DM}.}
\end{remark}
\vspace{-5mm}

\section{Low-Bandwidth Evaluation Schemes}
\label{sec:evaluation}
\vspace{-2mm}

In this section, we provide two low-bandwidth schemes that evaluates the weighted sum of $\ell$ Reed-Solomon coded symbols. 
Throughout this section, we set $\mathcal{B}\triangleq\{\beta_1,\ldots,\beta_{\ell}\}$ and $\mathcal{A}\triangleq\{\alpha_1,\ldots,\alpha_d\}$ to be two disjoint subsets of distinct points in $\mathbb{F}$.
Hence, we have that $\ell+d\leq|\mathbb{F}|$.
We further choose $f(x)$ to be a polynomial over $\mathbb{F}$ with degree less than $k$.
Suppose we have $d\geq k$ nodes and for $j\in[d]$ we store in node $j$, the value $f(\alpha_j)$.

Next, we fix an $\ell$-tuple of coefficients $\pmb{\kappa}\in\FF_*^{\ell}$ and our task is to compute $\SSS=\sum_{i=1}^\ell\kappa_i f(\beta_i)$ by downloading as little symbols from the other $d$ nodes. 
Specifically, we provide an $(\kappa,\C)$-weighted-sum evaluation scheme where $\C$ is the Reed-Solomon code $\rs({\cal B}\cup {\cal A},k)$.
To this end, we consider a base field $\mathbb{B}$ such that $\mathbb{F}$ is a field extension of degree $t$ over $\BB$.
We let $\{u_1,\ldots,u_t\}$ be an $\BB$-basis of $\mathbb{F}$ and $\{\Tilde{u}_1,\ldots,\Tilde{u}_t\}$ be the corresponding dual basis. 
Our first scheme (Scheme~\ref{scheme:trace}) uses trace polynomials defined by \eqref{eq:trace} and it demonstrates our general framework.
Later on, we describe Scheme~\ref{scheme:subspace} which uses subspace polynomials in lieu of trace polynomials.

\begin{algorithm}[t]
\caption{: Scheme based on Trace Polynomials. {\bf Bandwidth}: $d\log_2{|\mathbb{B}|}$ bits}
\label{scheme:trace}
\begin{algorithmic}
	\REQUIRE $\A\triangleq \{\alpha_j : j \in [d]\}$, $\B\triangleq \{\beta_j : j \in [\ell]\}$, \textcolor{black}{values of polynomial $f$ of degree at most $k-1$ in points $\mathcal{A}$, coefficients $\pmb{\kappa}\in\FF_*^{\ell}$}
	\ENSURE $\SSS \triangleq \sum_{i=1}^\ell \kappa_i c_i$
	\vspace{1mm}
	
	\STATE{\hspace*{-3mm}{\em Pre-computation Phase}}
	\STATE{\textcolor{black}{Choose $\BB$-basis $\{u_1,\ldots,u_t\}$ of $\FF$ and its trace-dual basis $\{\tilde{u}_1,\ldots,\tilde{u}_t\}$}}
	\FOR{$i\in[t]$, $j\in[d]$}
		\STATE{pre-compute $\sigma_{i,j}\triangleq \tr(u_i(\alpha_j-\beta_1)\cdots(\alpha_j-\beta_{\ell}))$}
	\ENDFOR
	\vspace{1mm}
	
	\STATE{\hspace*{-3mm}{\em Download Phase}}
	\FOR{$j\in[d]$}
		\STATE{download from node $j$
			\begin{equation}\label{eq:download-1}
				\tau_j=\tr\left(\frac{g(\alpha_j)\lambda_{\ell+j} f(\alpha_j)}{(\alpha_j-\beta_1)\cdots(\alpha_j-\beta_{\ell})}\right)\in\mathbb{B}
			\end{equation} 
			where
			\begin{equation}\label{eq:gx}
				g(x) = \sum_{j=1}^\ell\frac{\kappa_j}{\lambda_j}\frac{\prod_{\textcolor{black}{m\in[\ell],m\ne j}}(x-\beta_{\textcolor{black}{m}})}{\prod_{\textcolor{black}{m\in[\ell],m\ne j}}(\beta_j-\beta_{\textcolor{black}{m}})},
			\end{equation}
			and $\lambda_j$ are multipliers of the dual GRS code defined by~\eqref{GRSM}.
			}
	\ENDFOR
	\vspace{1mm}
	
	\STATE{\hspace*{-3mm}{\em Evaluation Phase}}
	\FOR{$i\in[t]$}
	\STATE{compute trace of $\SSS$ with respect to the basis element $u_i$. That is, 
	\begin{equation}\label{eq:traceS}
		\tr(u_i\SSS)=-\sum_{j=1}^d\sigma_{i,j}\tau_j    
	\end{equation}}
	\ENDFOR
	\STATE{Finally, recover $\SSS$ using Proposition~\ref{prop:trace}. That is, set $\SSS=\sum_{i=1}^t\tr(u_i\SSS)\Tilde{u}_i$.}
\end{algorithmic}
\end{algorithm}

First, we provide a formal description of Scheme~\ref{scheme:trace} and demonstrate its correctness.
\vspace{-2mm}

\begin{proof}[Proof of Theorem~\ref{thm:scheme}(i)]
	First, it is straightforward verify that only one sub-symbol in $\BB$ are downloaded from each node. Hence, the evaluation bandwidth is $d$.
	
	Next, we show that Scheme~\ref{scheme:trace} is correct. 
	Crucially, we prove that \eqref{eq:traceS} holds for all $i\in[t]$. 
	Now, since $(f(\omega))_{\omega\in\B\cup\A}$ belongs to $\rs(\B\cup\A,k)$, we consider its dual code $\grs(\mathcal{B}\cup\mathcal{A},n-k, \pmb{\lambda})$ where $\pmb{\lambda}$ is defined by \eqref{GRSM}. Next, we use trace polynomials to form parity check polynomials. For $i\in [t]$,  let
	\begin{equation}
		r_i(x) \triangleq \frac{g(x)\tr\left(u_i\prod_{j\in [\ell]} (x-\beta_j)\right)}{\prod_{j\in [\ell]} (x-\beta_j)},
	\end{equation}
	where $g(x)$ and $\pmb{\lambda}$ are defined by \eqref{eq:gx} and \eqref{GRSM}, respectively.
	
	Observe that $g(\beta_j)={\kappa_j}/{\lambda_j}$ for all $j\in[\ell]$.
	Hence, $r_i(\beta_j)=\frac{\kappa_j}{\lambda_j}u_i$ for all $j\in[\ell]$.
	Furthermore, the polynomial $r_i(x)$ has degree $\ell-1+\ell|\mathbb{B}|^{t-1}-\ell=\ell|\mathbb{B}|^{t-1}-1$. 
	As $d\geq\ell|\mathbb{B}|^{t-1}-\ell+k$, the polynomial $r_i(x)$ is indeed a parity-check polynomial for $f(x)$. So, it follows from \eqref{PCE} that for $i\in[t]$,
	\begin{align}
		r_i(\beta_1)\lambda_1f(\beta_1)+\dots+r_i(\beta_\ell)\lambda_\ell f(\beta_\ell) 
		&=-\sum_{j\in[d]}r_i(\alpha_j)\lambda_{\ell + j}f(\alpha_j), \text{~~~or,} \notag\\ u_i\kappa_1f(\beta_1)+\dots+u_i\kappa_\ell f(\beta_\ell)&=
		-\sum_{j\in[d]}r_i(\alpha_j)\lambda_{\ell+j}f(\alpha_j) \label{eq:rxfx}
	\end{align}
	Applying the trace function to both sides of the equation and utilizing its linearity, we obtain
	\vspace{-3mm}
	\begin{align*}
		\tr(u_i\SSS) & = 
		\tr\left(u_i\sum_{j\in[\ell]} \kappa_j f(\beta_j)\right) \\
		& = -\sum_{j\in[d]}\tr(r_i(\alpha_j)\lambda_{\ell+j}f(\alpha_j)) \\
		& = - \sum_{j\in[d]}\tr\left(u_i(\alpha_j-\beta_1)\cdots(\alpha_j-\beta_{\ell})\right)\tr\left(\frac{g(\alpha_j)\lambda_{\ell+j} f(\alpha_j)}{(\alpha_j-\beta_1)\cdots(\alpha_j-\beta_{\ell})}\right)\\
		& =-\sum_{j=1}^d\sigma_{i,j}\tau_j,    
	\end{align*}
\vspace{-5mm}
	as required.
\end{proof}
\vspace{3mm}
Next, we present an instructive example of Scheme~\ref{scheme:trace}.
\begin{example}
Let $\mathbb{B} = \textrm{GF}(4) = \{0,1,b,1+b\}$ with $b^2 = b+1$, and
let $\mathbb{F} = \textrm{GF}(4^2) = \{0,1,\alpha,\ldots,\alpha^{14}\}$ with $\alpha^2 = \alpha + b$ and $\alpha^4=\alpha+1$.
So, $\FF$ is the field extension of $\mathbb{B}$ with degree $t = 2$ and basis $\{u_1,u_2\}=\{1,\alpha\}$.

Set $\mathcal{B}=\{0, 1\}$ and $\mathcal{A}=\{\alpha,\ldots,\alpha^{14}\}$ with $\ell=2$ and $d=14$.
We consider the Reed-Solomon code $\C=\rs(\B\cup\A,k)$ with $k=8$ and the polynomial $f(x)=x^7$. 
In what follows, utilizing Scheme~1, we demonstrate the evaluation of $\SSS=f(0)+f(1)$ by downloading $d=14$ sub-symbols (in $\BB$) from the nodes corresponding to $\A$. 
		
In the pre-computation phase, we compute the 28 values $\sigma_{i,j}=\tr(u_i)$ for $i\in[2]$ and $j\in[14]$.
Then from the node storing $f(x)$, we download the sub-symbol $\tr\left(\frac{f(x)}{x(x-1)}\right)\in \BB$.
Recall that in this case, we have that $\tr(x)=x+x^4$. 

In summary, we have the following pre-computed and downloaded values.
\begin{center}
\small
\begin{tabular}{|p{14mm}| r|r|r|r|r| r|r|r|r|r| r|r|r|r|}
\hline
Node $j$ & 1 & 2 & 3 & 4 & 5 & 6 & 7 & 8 & 9 & 10 & 11 & 12 & 13 & 14 \\ \hline
Evaluation point & 	$\alpha$ 	& $\alpha^2$    & $\alpha^3$ & $\alpha^4$ & $\alpha^5$  &	
         			$\alpha^6$ 	& $\alpha^7$    & $\alpha^8$ & $\alpha^9$ & $\alpha^{10}$ & 
         			$\alpha^{11}$  & $\alpha^{12}$ & $\alpha^{13}$ &$\alpha^{14}$ \\ \hline
$\sigma_{1,j}$ 	& $0$ & $0$ & $1$ & $0$ & $0$ 		& $1$ & $1$ & $0$ & $1$ & $0$ 		& $1$ & $1$ & $1$ & $1$\\     
$\sigma_{2,j}$ 	& $b$ & $1+b$ & $1+b$ & $b$ & $1$ 	& $0$ & $1$ & $1+b$ & $1$ & $1$ 	& $b$ & $b$ & $0$ & $1+b$\\ \hline
$\tau_j$     	& $1$ & $1$ & $1$ & $1$ & $0$		& $1$ & $1+b$ & $1$ & $1$ & $0$ 	& $b$ & $1$ & $1+b$ & $b$ \\ \hline
\end{tabular}
\end{center}

Then in the evaluation phase, we use these values and \eqref{eq:traceS} to obtain two independent traces: 
\begin{equation*}
	\tr(u_1\SSS)= 0\quad\text{and}\quad \tr(u_2\SSS)= 1,
\end{equation*}
and using trace-dual basis $\{\widetilde{u}_1, \widetilde{u}_2\}=\{\alpha^4, 1\}$, we obtain our desired value:
\begin{equation*}
f(0) + f(1) = \SSS = \tr(u_1\SSS)\widetilde{u}_1 + \tr(u_2\SSS)\widetilde{u}_2 = 0\cdot \alpha^4 + 1 \cdot 1 =  1.
\end{equation*}
In this case, we downloaded $14$ target traces of $2$ bits (in total $28$ bits), which is less than $32$ bits (if we download $8$ available symbols of $4$ bits each).
\end{example}
Scheme~\ref{scheme:trace} utilize trace polynomials to form parity-check equations. 
In the next scheme, we replace these polynomials by subspace-polynomials and hence, obtain more flexibility in terms of possible values of $t$ and $|\BB|$ (see Section~\ref{sec:numerical}).

\begin{definition}\label{SSP}
	Let $W$ be a $\BB$-subspace of $\FF$  with dimension $s<t$. 
	The {\em subspace polynomial defined by $W$} can be written as 
	\begin{equation}
		L_W(x)=\prod_{\alpha\in W}(x-\alpha)=x\prod_{\alpha\in W\setminus\{0\}}(x-\alpha)=\sum_{j=0}^se_jx^{|\mathbb{B}|^j}\,.
	\end{equation}
\end{definition}

Observe that when $W$ is the kernel of the trace function~\eqref{eq:trace}, the subspace polynomial corresponding to $W$ is in fact the trace polynomial.
Moreover, one can show that $L_W:\FF\to\FF$ is $\mathbb{B}$-linear mapping whose kernel is given by the subspace $W$. 
Therefore, the image of $L_W$ is a subspace of dimension $t-s$ over $\mathbb{B}$ and 
we let $\{\chi_1,\dots,\chi_{t-s}\}$ be a basis. 
Hence, each element of $\textrm{Im}(L_W)$ can be represented as a $\BB$-linear combination of the elements in $\{\chi_1,\dots,\chi_{t-s}\}$.

We are now ready to describe Scheme~\ref{scheme:subspace} and provide the proof of Theorem~\ref{thm:scheme}(ii).

\begin{proof}[Proof of Theorem~\ref{thm:scheme}(ii)]
	First, it is straightforward to verify that $(t-s)$ sub-symbols in $\BB$ are downloaded from each node. Hence, the evaluation bandwidth is $d(t-s)$.
	
	Next, we show that Scheme~\ref{scheme:subspace} is correct, i.e., we prove that~\eqref{eq:traceS(ii)} holds for all $i\in[t]$. Similarly, since $(f(\omega))_{\omega\in\B\cup\A}$ belongs to $\rs(\B\cup\A,k)$, we consider its dual code $\grs(\mathcal{B}\cup\mathcal{A},n-k, \pmb{\lambda})$ where $\pmb{\lambda}$ is defined by \eqref{GRSM}.  However, instead of using trace polynomials, we use subspace polynomials to form parity check polynomials. Specifically, for all $i\in[t]$, we set
	\begin{equation}
		r_{W,i}(x)\triangleq  \frac{g(x)L_{W}\left(u_i\prod_{j\in [\ell]} (x-\beta_j)\right)}{\prod_{j\in [\ell]} (x-\beta_j)},
	\end{equation}
	where $g(x)$ and $\pmb{\lambda}$ are defined by \eqref{eq:gx} and \eqref{GRSM}, respectively.
	
	Observe that $g(\beta_j)=\frac{\kappa_j}{\textcolor{black}{e}_0\lambda_j}$ for all $j\in[\ell]$. Hence, $r_{W,i}(\beta_j)=g(\beta_j)\textcolor{black}{e}_0u_i=u_i\frac{\kappa_j}{\lambda_j}$ for all $j\in[\ell]$. Furthermore, the polynomial $r_{W,i}(x)$ has degree $\ell-1+\ell|\mathbb{B}|^{s}-\ell=\ell|\mathbb{B}|^{s}-1$. As $d\geq \ell|\mathbb{B}|^s-\ell+k$, the polynomial $r_{W,i}(x)$ is indeed a parity-check polynomial for $f(x)$. So, it follows from~\eqref{PCE} that for $i\in[t]$,
	\begin{align}
	r_{W,i}(\beta_1)\lambda_1f(\beta_1)+\dots+r_{W,i}(\beta_\ell)\lambda_{\ell}f(\beta_\ell)&=-\sum_{j\in[d]}r_{W,i}(\alpha_j)\lambda_{\ell + j}f(\alpha_j), \text{~~~or,}\\
	\label{eq:pceforsubspace}
	u_i\kappa_1f(\beta_1)+\dots+ u_i\kappa_\ell f(\beta_\ell)&=-\sum_{j\in[d]}r_{W,i}(\alpha_j)\lambda_{\ell + j}f(\alpha_j)
	\end{align}
	Note that, for any $i\in [t]$,
	\begin{align*}
		r_{W,i}(\alpha_j) &= \frac{g(\alpha_j)}{(\alpha_j-\beta_1)\cdots(\alpha_j-\beta_l)}L_W(u_i(\alpha_j-\beta_1)\cdots(\alpha_j-\beta_\ell))\\
		&=\frac{g(\alpha_j)}{(\alpha_j-\beta_1)\cdots(\alpha_j-\beta_\ell)}(\sigma_{i,j,1}\chi_1+\cdots+\sigma_{i,j,t-s}\chi_{t-s}),
	\end{align*}
	where $\sigma_{i,j,1},\ldots,\sigma_{i,j,t-s}\in \mathbb{B}$ for all $j\in[d]$. Applying the trace function to both sides of the equation~\eqref{eq:pceforsubspace} and utilizing its linearity, we obtain
	\begin{align*}
		\tr(u_i\SSS) &= \tr\left(u_i\sum_{j\in[\ell]} \kappa_j f(\beta_j)\right)\\
		&= -\sum_{j\in [d]} \tr\left(r_{W,i}(\alpha_j) \lambda_{\ell+j} f(\alpha_j)\right)\\
		&= - \sum_{j\in[d]}\tr\left(\frac{\sigma_{i,j,1}g(\alpha_j)\lambda_{\ell+j}f(\alpha_j)\chi_1}{(\alpha_j-\beta_1)\cdot\ldots\cdot(\alpha_j-\beta_\ell)}+\cdots+\frac{\sigma_{i,j,t-s}g(\alpha_j)\lambda_{\ell+j}f(\alpha_j)\chi_{t-s}}{(\alpha_j-\beta_1)\cdot\ldots\cdot(\alpha_j-\beta_\ell)}\right)\\
		&= - \sum_{j\in[d]}\left(\sigma_{i,j,1}\tau_{j,1}+\cdots+\sigma_{i,j,t-s}\tau_{j,t-s}\right),
	\end{align*}
	as required.
\end{proof}
\begin{algorithm}[!t]
	\caption{: Scheme based on Subspace Polynomials. {\bf Bandwidth}: $d(t-s)\log_2{|\mathbb{B}|}$ bits}
	\label{scheme:subspace}
	\begin{algorithmic}
		\REQUIRE $\A\triangleq \{\alpha_j : j \in [d]\}$, $\B\triangleq \{\beta_j : j \in [\ell]\}$, \textcolor{black}{values of polynomial $f$ of degree at most $k-1$ in points $\mathcal{A}$, coefficients $\pmb{\kappa}\in\FF_*^{\ell}$, $\BB$-linear subspace $W$ of $\FF$ with dimension $s$.}
		\ENSURE $\SSS \triangleq \sum_{i=1}^\ell \kappa_i c_i$
		\vspace{1mm}
		
		\STATE{\hspace*{-3mm}{\em Pre-computation Phase}}
		\STATE{\textcolor{black}{Choose $\BB$-basis $\{u_1,\ldots,u_t\}$ of $\FF$, its trace-dual basis $\{\tilde{u}_1,\ldots,\tilde{u}_t\}$ and basis $\{\chi_1,\ldots,\chi_{t-s}\}$ of $\textrm{Im}(L_W)$}}
		\FOR{$i\in[t]$, $j\in[d]$}
		\STATE{pre-compute $\sigma_{i,j,1},\ldots,\sigma_{i,j,t-s}$ coefficients of $L_W(u_i(\alpha_j-\beta_1)\cdots(\alpha_j-\beta_\ell))$ in basis $\{\chi_1,\ldots,\chi_{t-s}\}$}
		\ENDFOR
		\vspace{1mm}
		
		\STATE{\hspace*{-3mm}{\em Download Phase}}
		\FOR{$j\in[d]$}
		\STATE{download from node $j$
			\begin{equation}\label{eq:download-2}
				\tau_{j,m}=\tr\left(\frac{g(\alpha_j)\lambda_{\ell+j} f(\alpha_j)\chi_{t-s}}{(\alpha_j-\beta_1)\cdots(\alpha_j-\beta_{\ell})}\right)\in \BB \text{~~~~for $m \in [t-s]$,}
			\end{equation}
			where,
			\begin{align}
				g(x) = \sum_{j=1}^\ell\frac{\kappa_j}{\textcolor{black}{e_0}\lambda_j}\frac{\prod_{\textcolor{black}{m\in[\ell],m\ne j}}(x-\beta_k)}{\prod_{\textcolor{black}{m\in[\ell],m\ne j}}(\beta_j-\beta_k)},
			\end{align}
			and $\lambda_j$ are multipliers of the dual GRS code defined by~\eqref{GRSM}.
		}
		\ENDFOR
		\vspace{1mm}
		
		\STATE{\hspace*{-3mm}{\em Evaluation Phase}}
		\FOR{$i\in[t]$}
		\STATE{compute trace of $\SSS$ with respect to the basis element $u_i$. That is, 
			\begin{equation}\label{eq:traceS(ii)}
				\tr(u_i\SSS)=-\sum_{j=1}^d\big(\sigma_{i,j,1}\tau_{j,1}+\cdots+\sigma_{i,j,t-s}\tau_{j,t-s}\big)  
		\end{equation}}
		\ENDFOR
		\STATE{Finally, recover $\SSS$ using Proposition~\ref{prop:trace}. That is, set $\SSS=\sum_{i=1}^t\tr(u_i\SSS)\Tilde{u}_i$.}
	\end{algorithmic}
\end{algorithm}

\section{Lower Bound on Evaluation Bandwidth}
\label{sec:lowerbound}

In this section, we provide a detailed proof of Theorem~\ref{thm:lowerbound}.
First, following~\cite{Mardia, weiqi}, we characterize a $(\pmb{\kappa},\mathcal{C})$-weighted sum recovery scheme for \textit{any $\FF$-linear code} in terms of a matrix whose columns belong to its dual code. 

Henceforth, $\FF$ is an extension field of $\BB$ with degree $t$. 
Hence, each element in $\mathbb{F}$ can be considered as a column vector of length $t$ over $\mathbb{B}$. In particular, we represent a vector in $\FF^d$ 
as a $(t\times d)$-matrix $\bM$ over $\mathbb{\BB}$. 
Then we use $\rank_{\mathbb{B}}(\bM)$ to denote the rank of $\bM$ over $\BB$.
We also adopt the following matrix-vector notation. 
For a matrix $\pmb{M}$, we use $\pmb{M}[i,j]$ to denote the entry in the $i$-th row and $j$-th column.
Moreover, we use $\pmb{M}[i,:]$ and $\pmb{M}[:,j]$ to refer to the $i$-th row and $j$-th column of $\bM$, respectively.

\begin{definition}\label{def:evalmatrix}
Let $\mathcal{C}\subseteq \mathbb{F}^{\ell + d}$ be an $\FF$-linear code. 
Let us fix $\pmb{\kappa}\in\mathbb{F}_*^{\ell}$. 
Then an $(\ell+d)\times t$ matrix $\pmb{M}$ whose entries belong to $\mathbb{F}$ is called a $(\pmb{\kappa},\mathcal{C})$-{\em evaluation matrix} with bandwidth $b$ subsymbols if the following conditions hold:
\begin{enumerate}[(C1)]
\setlength{\itemsep}{0pt}	
\item Columns of $\pmb{M}$ are codewords in the dual code $\mathcal{C}^{\perp}$.
\item For $i\in[\ell]$, the following holds 
\begin{equation}\label{eq:matrix-top}
	\left(\prod_{i\ne1}\kappa_i\right)\pmb{M}[1,:]
	=\left(\prod_{i\ne2}\kappa_i\right)\pmb{M}[2,:]=\cdots
	=\left(\prod_{i\ne\ell}\kappa_i\right)\pmb{M}[\ell,:].
\end{equation}
For convenience, we set $\pmb{M}_0=(\prod_{i\ne1}\kappa_i)\pmb{M}[1,:]$. 
Furthermore, we have $\rank_{\mathbb{B}}\pmb{M}_0=t$. 
\item We have that 
\begin{equation}\label{eq:matrix-bandwidth}
	\sum_{j\in[d]}\rank_{\mathbb{B}}\pmb{M}[\ell+j,:]=b.    
\end{equation} 
\end{enumerate}
\end{definition}

Using this notion of evaluation matrix, we characterize when a code $\C$ admits an evaluation scheme with certain bandwidth.

\begin{theorem}\label{thm:characterization}
	Let $\mathcal{C}\subseteq \mathbb{F}^{\ell + d}$ be an $\FF$-linear code and
	$\pmb{\kappa}\in \mathbb{F}_*^{\ell}$. 
	Then $\mathcal{C}$ admits a $(\pmb{\kappa},\mathcal{C})$-weighted-sum evaluation scheme with bandwidth $b$ subsymbols if and only if there exists a $(\pmb{\kappa},\mathcal{C})$-evaluation matrix with bandwidth $b$ that satisfies the conditions in Definition~\ref{def:evalmatrix}.
\end{theorem}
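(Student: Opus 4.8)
\textbf{Proof plan for Theorem~\ref{thm:characterization}.}

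The plan is to establish the two directions of the equivalence separately, translating between $\BB$-linear evaluation schemes and evaluation matrices via the trace formalism already set up in Proposition~\ref{prop:trace} and the discussion following \eqref{eq:trace}. The central dictionary is that every $\BB$-linear map $\FF\to\BB$ is of the form $x\mapsto\tr(\gamma x)$ for a unique $\gamma\in\FF$, and more generally a $\BB$-linear map $g_j:\FF\to\BB^{b_j}$ is given by $g_j(x)=(\tr(\gamma_{j,1}x),\ldots,\tr(\gamma_{j,b_j}x))$ where the $\BB$-linear span of $\gamma_{j,1},\ldots,\gamma_{j,b_j}$ has dimension equal to the "true" number of sub-symbols the map conveys; we may assume without loss of generality that the $\gamma_{j,m}$ are $\BB$-linearly independent, so $b_j=\dim_\BB\mathrm{span}_\BB\{\gamma_{j,m}\}$.

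For the ``only if'' direction, suppose a $\BB$-linear $(\pmb{\kappa},\mathcal{C})$-evaluation scheme with bandwidth $b=\sum_j b_j$ is given. Since the scheme recovers $\SSS=\sum_{i=1}^\ell\kappa_ic_i$ $\BB$-linearly from the downloaded sub-symbols, for each $\BB$-basis element $u_p$ of $\FF$ there exist constants $\mu_{p,j,m}\in\BB$ with $\tr(u_p\SSS)=\sum_{j\in[d]}\sum_{m\in[b_j]}\mu_{p,j,m}\tr(\gamma_{j,m}c_{\ell+j})$ for all codewords $\pmb{c}\in\mathcal C$. Rewriting $\tr(u_p\SSS)=\sum_{i\in[\ell]}\tr(u_p\kappa_i c_i)$ and using $\BB$-linearity of the trace, this becomes a single $\BB$-linear identity in $(c_1,\ldots,c_{\ell+d})$ valid on all of $\mathcal C$, i.e. the coefficient vector lies in $\mathcal C^\perp$ when we think of it over $\BB$; pushing this through the correspondence, I would define column $p$ of $\bM$ to have $i$-th entry ($i\in[\ell]$) equal to $u_p\kappa_i$ times a suitable normalization, $(\ell+j)$-th entry equal to $\sum_{m}\mu_{p,j,m}\gamma_{j,m}$, and check (C1) (the resulting vector annihilates $\mathcal C$, hence is in $\mathcal C^\perp$ — here I must be careful that an $\FF$-vector annihilating $\mathcal C$ over $\BB$ in the trace sense is genuinely in $\mathcal C^\perp$, which holds because $\mathcal C$ is $\FF$-linear). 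Condition (C2): by construction $\bM[i,:]$ has $i$-th slot $\propto u_p\kappa_i$, and scaling row $i$ by $\prod_{i'\ne i}\kappa_{i'}$ makes all rows equal to $\bM_0$ with $p$-th entry $u_p\prod_{i'}\kappa_{i'}$; since $\{u_p\}$ is a basis, $\rank_\BB\bM_0=t$. Condition (C3): the $(\ell+j)$-th row has entries $\sum_m\mu_{p,j,m}\gamma_{j,m}$ as $p$ ranges over $[t]$, whose $\BB$-span is contained in $\mathrm{span}_\BB\{\gamma_{j,m}:m\in[b_j]\}$, so $\rank_\BB\bM[\ell+j,:]\le b_j$; to get the bandwidth exactly $b$ one may, if needed, first reduce each $g_j$ so that $b_j$ equals this rank (a scheme never needs to download more sub-symbols than the rank of the span actually used), giving $\sum_j\rank_\BB\bM[\ell+j,:]=b$.

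For the ``if'' direction, given a $(\pmb{\kappa},\mathcal{C})$-evaluation matrix $\bM$ with bandwidth $b$, I would run the argument in reverse. For each $j$, let $\{\gamma_{j,1},\ldots,\gamma_{j,b_j}\}$ be a $\BB$-basis of $\mathrm{span}_\BB\{\bM[\ell+j,p]:p\in[t]\}$ so that $\bM[\ell+j,p]=\sum_m \mu_{p,j,m}\gamma_{j,m}$ for scalars $\mu_{p,j,m}\in\BB$; define the download map $g_j(x)=(\tr(\gamma_{j,1}x),\ldots,\tr(\gamma_{j,b_j}x))$, so exactly $b_j$ sub-symbols come from node $j$ and $\sum_j b_j=b$ by (C3). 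Now (C1) says each column $\bM[:,p]\in\mathcal C^\perp$, so $\sum_{i\in[\ell]}\bM[i,p]c_i+\sum_{j\in[d]}\bM[\ell+j,p]c_{\ell+j}=0$ for every $\pmb{c}\in\mathcal C$; applying $\tr$ and expanding the second sum via the $\gamma_{j,m}$ expresses $\sum_{i}\tr(\bM[i,p]c_i)$ as a $\BB$-linear combination of the downloaded sub-symbols $\tr(\gamma_{j,m}c_{\ell+j})$. By (C2), after the $\kappa$-scaling we have $\bM[i,p]=\kappa_i\cdot(\text{common factor depending only on }p)$, indeed $(\prod_{i'\ne i}\kappa_{i'})\bM[i,p]=\bM_0[p]$, so $\sum_i\tr(\bM[i,p]c_i)$ is a nonzero scalar multiple (namely $(\prod_{i'}\kappa_{i'})^{-1}$ times, working over $\FF$ as needed, or more carefully: it equals $\tr\big(\tfrac{\bM_0[p]}{\prod_{i'}\kappa_{i'}}\SSS\big)$) of $\tr(v_p\SSS)$ where $v_p=\bM_0[p]/\prod_{i'}\kappa_{i'}$. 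Since $\rank_\BB\bM_0=t$, the $v_p$ form a $\BB$-basis of $\FF$, so the $t$ recovered quantities $\tr(v_p\SSS)$ determine $\SSS$ uniquely by Proposition~\ref{prop:trace} (after passing to the trace-dual basis of $\{v_p\}$), and every step of this reconstruction is $\BB$-linear. Hence the $g_j$ together with this reconstruction constitute a $\BB$-linear $(\pmb{\kappa},\mathcal{C})$-evaluation scheme with bandwidth $b$.

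The main obstacle I anticipate is bookkeeping the $\pmb{\kappa}$-normalization in condition (C2) cleanly — making sure the ``common factor'' extraction and the division by $\prod_{i'}\kappa_{i'}$ are done consistently so that $\bM_0$ (hence $\rank_\BB\bM_0=t$) is exactly what certifies that the recovered traces $\tr(v_p\SSS)$ span enough to invert — and, in the ``only if'' direction, the subtlety that one must first normalize the given scheme so its per-node sub-symbol count is genuinely $\rank_\BB\bM[\ell+j,:]$ rather than something larger; both are routine once the trace dictionary is fixed, but they are where a careless proof would lose a constant or claim the wrong bandwidth.
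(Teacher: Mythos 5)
Your plan is correct and follows essentially the same route as the paper's appendix proof: both directions are handled by the same trace dictionary (every $\BB$-linear map $\FF\to\BB$ is $x\mapsto\tr(\gamma x)$), the $\Leftarrow$ direction reads off download functions from bases of the row spans $\mathrm{span}_\BB\{\bM[\ell+j,p]:p\in[t]\}$ and inverts via the trace-dual basis of $\bM_0$, and the $\Rightarrow$ direction builds the columns of $\bM$ from the coefficients $\mu_{p,j}$. The one place where your sketch is more compressed than the paper is the step you flag yourself: upgrading the collection of trace identities $\tr\big(u_p\SSS-\sum_j\mu_{p,j}c_{\ell+j}\big)=0$ (for all $\bc\in\C$) to the genuine $\FF$-identity $u_p\SSS=\sum_j\mu_{p,j}c_{\ell+j}$, which is what certifies (C1). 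You correctly attribute this to $\FF$-linearity of $\C$, but the paper spells it out: since $\delta_l\bc\in\C$ for a $\BB$-basis $\{\delta_l\}$ of $\FF$, one gets $\tr\big(\delta_l(u_p\SSS-\sum_j\mu_{p,j}c_{\ell+j})\big)=0$ for all $l$, and nondegeneracy of the trace pairing then forces the argument to vanish; you should make this explicit rather than leaving it as "holds because $\C$ is $\FF$-linear." You also correctly note the bandwidth bookkeeping (WLOG reduce $b_j$ to $\rank_\BB$ of the actual span used), which the paper handles the same way. With those two points written out, your plan matches the paper's proof.
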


The proof of Theorem~\ref{thm:characterization} is similar to that in~\cite{Mardia, weiqi} and so, we defer it to the appendix.
Instead, we construct the evaluation matrix corresponding to Scheme~\ref{scheme:trace} and verify Conditions (C1), (C2) and (C3).

\begin{example}
Let $\B$ and $\A$ be sets of evaluation points as defined in Section~\ref{sec:evaluation}.
Set $\C$ to be the Reed-Solomon code $\rs(\mathcal{B}\cup\mathcal{A},k)$ and we consider the $(\pmb{\kappa},\mathcal{C})$-weighted-sum evaluation scheme given by Scheme~\ref{scheme:trace}.
Let $\{u_1,u_2,\ldots, u_t\}$ be $\BB$-basis of $\FF$, then the corresponding evaluation matrix $\pmb{M}$ is
\begin{equation*} 
	\pmb{M} = \left[ \begin{matrix} 
		\kappa_1u_1 & \cdots & \kappa_1u_t \\ 
		\vdots & \ddots & \vdots \\ 
		\kappa_{\ell}u_1 & \ldots & \kappa_{\ell}u_t \\ \dfrac{\sigma_{1,1}\lambda_{\ell+1}g(\alpha_1)}{\prod_{j\in[\ell]}(\alpha_1-\beta_j)} & \cdots & \dfrac{\sigma_{t,1}\lambda_{\ell+1}g(\alpha_1)}{\prod_{j\in[\ell]}(\alpha_1-\beta_j)} \\ 
		\vdots & \ddots & \vdots \\ \dfrac{\sigma_{1,d}\lambda_{\ell+d}g(\alpha_d)}{\prod_{j\in[\ell]}(\alpha_d-\beta_j)} & \ldots & \dfrac{\sigma_{t,d}\lambda_{\ell+d}g(\alpha_d)}{\prod_{j\in[\ell]}(\alpha_d-\beta_j)} \end{matrix} \right]\, ,
\end{equation*}
where $g(x)$ and $\sigma_{i,j}$ are defined in Scheme~\ref{scheme:trace}.

To verify Condition (C1), we recall the definition of $r_m(x)$ in the proof of Theorem~\ref{thm:scheme}(i) for $m\in[t]$.
Then we observe that $\kappa_i u_m =\lambda_i r_m(\beta_i)$ for $i\in [\ell]$, and that 
$\frac{\sigma_{m,i} \lambda_{\ell+i} g(\alpha_i)}{\prod_{j\in[\ell]}(\alpha_1-\beta_j)}=\lambda_{\ell+i}r_m(\alpha_i)$.
In other words, the $m$-th column is given by $(\lambda_1 r_m(\beta_1),\ldots, \lambda_\ell r_m(\beta_\ell), \lambda_{\ell+d} r_m(\alpha_{\ell+1}),\ldots, \lambda_{\ell+d} r_m(\alpha_{\ell+d}))^T$. 
Since $r_m(x)$ is a polynomial of degree at most $\ell+d-k-1$, the column is a codeword in the dual of $\rs(\B\cup\A,k)$.

To verify Condition (C2), we observe that $\left(\prod_{j\ne i} \kappa_j\right) \bM[i,:] = \left(\prod_{j\ne i} \kappa_j\right)[u_1,u_2,\ldots, u_t]$. Since $\{u_1,u_2,\ldots, u_t\}$ is a basis, we have that $\left(\prod_{j\ne i} \kappa_j\right)[u_1,u_2,\ldots, u_t]$ is of full-rank over $\BB$, as required by Condition (C2).

Finally, for $i\in [d]$, we have that 
\[\bM[\ell+i,:] = \Big[\sigma_{1,i}, \sigma_{2,i}, \ldots, \sigma_{t,i}\Big] \frac{\lambda_{\ell+i}g(\alpha_i)}{\prod_{i\in[\ell]}(\alpha_i-\beta_j)}\,,  \] 
and thus, $\rank_{\BB}\bM[\ell+i,:]=1$. Therefore, we verify Condition (C3) and have that the corresponding evaluation bandwidth is $d$ subsymbols, corroborating Theorem~\ref{thm:scheme}(i).
\end{example}

\vspace{2mm}

As before, we pick two disjoint subsets $\mathcal{B}\triangleq\{\beta_i : i\in [\ell]\}$, $\mathcal{A}\triangleq\{\alpha_j: j\in [d] \}$ and $k\le \ell+d$.
We set $\C$ to be the Reed-Solomon code $\rs(\B\cup\A,k)$ and for the rest of the section, we derive a lower bound for the evaluation bandwidth of a $(\pmb{\kappa},\mathcal{C})$-weighted sum recovery scheme.

First, we characterize the evaluation matrix (defined in Theorem~\ref{thm:characterization}) in terms of parity-check polynomials in its dual code. Specifically, we have the following lemma.

\begin{lemma}\label{lem:poly}
Let $\mathcal{C}$ be $\rs(\B\cup\A,k)$.
If $\pmb{M}\in\mathbb{F}^{(\ell+d)\times t}$ is a $(\pmb{\kappa},\mathcal{C})$-evaluation matrix, 
then there exist $t$ polynomials $p_1(X),\ldots, p_t(X)$ with $\deg p_j(X)\leq \ell +d-k-1$ so that for all $j\in[t]$, 
\begin{equation}
	\bM[i,j] = 
	\begin{cases}
		\lambda_ip_j(\beta_i),            & \text{ if } i \le \ell,\\
		\lambda_ip_j(\alpha_{i-\ell}),    & \text{ if } \ell+1\le i \le d+\ell.
	\end{cases}
\end{equation}
Here, $\lambda_i$ are multipliers of the dual GRS code defined by~\eqref{GRSM}. 
\end{lemma}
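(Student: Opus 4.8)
\textbf{Proof proposal for Lemma~\ref{lem:poly}.}

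The plan is to exploit the fact that each column of the evaluation matrix $\bM$ is, by Condition (C1), a codeword of the dual code $\C^\perp = \grs(\B\cup\A, n-k, \pmb{\lambda})$, where $n = \ell+d$. By the definition of a generalized Reed-Solomon code (Definition~\ref{def:rs}), every codeword of $\grs(\B\cup\A, n-k, \pmb{\lambda})$ is of the form $(\lambda_{\omega} q(\omega))_{\omega\in\B\cup\A}$ for some polynomial $q(X)\in\FF[X]$ with $\deg q(X) < n-k = \ell+d-k$. So for each column index $j\in[t]$, applying this to the $j$-th column $\bM[:,j]$ gives a polynomial $p_j(X)$ with $\deg p_j(X)\le \ell+d-k-1$ such that $\bM[i,j] = \lambda_i p_j(\beta_i)$ for $i\le\ell$ and $\bM[\ell+i',j] = \lambda_{\ell+i'} p_j(\alpha_{i'})$ for $i'\in[d]$. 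Matching this against the indexing in the statement (where $i$ ranges over $1,\dots,\ell+d$ and $\alpha_{i-\ell}$ appears for $i>\ell$) yields exactly the claimed piecewise formula.

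The only substantive point to check is that the correspondence between dual codewords and polynomials of degree at most $\ell+d-k-1$ is well-defined, i.e.\ that distinct such polynomials give distinct codewords, so that $p_j$ is genuinely determined by the column. This follows from the fact that $\B\cup\A$ consists of $\ell+d$ distinct points in $\FF$ and a nonzero polynomial of degree at most $\ell+d-k-1 < \ell+d$ cannot vanish at all of them; equivalently, the evaluation map $\{q\in\FF[X] : \deg q < \ell+d-k\}\to\FF^{\ell+d}$ is injective, and multiplying coordinate-wise by the nonzero multipliers $\lambda_i$ preserves injectivity. Hence each column determines its polynomial uniquely, although for the lemma as stated only existence is needed.

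There is essentially no obstacle here: the lemma is just a restatement of Condition (C1) through the explicit description of $\grs$ codewords together with the identification (stated in the preliminaries, see~\eqref{GRSM}) of the dual of $\rs(\B\cup\A,k)$ as $\grs(\B\cup\A,\ell+d-k,\pmb{\lambda})$. The one place to be slightly careful is bookkeeping: Condition (C1) only tells us the columns lie in $\C^\perp$, and one must invoke the preliminaries' fact that $\C^\perp$ is precisely $\grs(\B\cup\A, n-k, \pmb{\lambda})$ with the multipliers given by~\eqref{GRSM} — this is what pins down the $\lambda_i$'s appearing in the statement rather than some arbitrary nonzero scalars. Everything else is a direct unwinding of definitions.
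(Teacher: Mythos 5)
Your proposal is correct and follows exactly the same route as the paper: invoke Condition (C1) to place each column of $\bM$ in $\C^\perp$, identify $\C^\perp$ as $\grs(\B\cup\A,\ell+d-k,\pmb{\lambda})$ with multipliers given by~\eqref{GRSM}, and read off the polynomial from the GRS evaluation form. The paper states this in one sentence; you simply unwind it in more detail (including the uniqueness observation, which, as you note, is not needed for the existence claim).
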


\begin{proof}
The result immediately follows from the fact that the dual of the Reed-Solomon code is a generalized Reed-Solomon code with multipliers given by~\eqref{GRSM} and Condition (C1) of an evaluation matrix definition.
\end{proof}

Let ${\cal P}$ be the set of polynomials $p_1(X),\ldots,p_t(X)$ in Lemma~\ref{lem:poly}.
For $\bx=(x_1,\ldots,x_t)\in\mathbb{B}^t$, we set $p_{\bx}(X)\triangleq x_1p_1(X) + \cdots + x_tp_t(X)$, a $\mathbb{B}$-linear combination of polynomials in ${\cal P}$. Then for $\omega\in\B\cup \A$, we consider the set $S_{\omega}=\{\bx\in\mathbb{B}^t\setminus  \{ \pmb{0} \} :\, p_\bx(\omega) = 0\}$ and the quantity
\begin{equation}
	\Omega=\frac{1}{|\FF|-1}\sum_{\bx\in \mathbb{B}^t\setminus\{\pmb{0}\}}\sum_{\omega\in \mathcal{B}\cup\mathcal{A}}\mathbb{I}(\bx\in S_{\omega}).
\end{equation}
Here, $\mathbb{I}$ is the indicator function. 

Now, for $i\in[d]$, we consider the value $b_i\triangleq \rank_{\BB}(\pmb{M}[\ell+i,:])=$ $\rank_{\mathbb{B}}\{\lambda_{i}p_1(\alpha_i),\ldots,\lambda_{i}p_{t}(\alpha_i)\}=\rank_\mathbb{B}\{p_1(\alpha_i),\ldots,p_t(\alpha_i)\}$.
Then from Condition~(C3) of Definition~\ref{def:evalmatrix}, we have that $\sum_{i\in[d]}b_i=b$.
In what follows, we compute $\Omega$ in two different ways and obtain certain constraints on the values of $b_i$'s.
\begin{lemma}\label{lem:Omega-1}
The quantity $\Omega$ is given by 
	\begin{equation*}
		\Omega=\frac{1}{|\FF|-1}\sum_{i\in[d]}(|\BB|^{t-b_{i}}-1).
	\end{equation*}
\end{lemma}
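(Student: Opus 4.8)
\textbf{Proof plan for Lemma~\ref{lem:Omega-1}.}

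The plan is to compute $\Omega$ by swapping the order of summation, so that instead of summing over evaluation points and then over nonzero vectors $\bx\in\BB^t$, we first fix an evaluation point and count how many nonzero $\bx$ land in $S_\omega$. Since $\mathcal{B}\cup\mathcal{A}$ has $\ell+d$ points, we split the inner sum into the $\ell$ points of $\mathcal{B}$ and the $d$ points of $\mathcal{A}$. The key observation is that for a fixed $\omega$, the set $S_\omega\cup\{\pmb 0\}=\{\bx\in\BB^t : p_\bx(\omega)=0\}$ is precisely the kernel of the $\BB$-linear map $\BB^t\to\FF$ sending $\bx\mapsto p_\bx(\omega)=\sum_j x_j p_j(\omega)$, whose image is the $\BB$-span of $\{p_1(\omega),\ldots,p_t(\omega)\}$. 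Hence $|S_\omega|+1 = |\BB|^{t-\rho(\omega)}$ where $\rho(\omega)=\rank_\BB\{p_1(\omega),\ldots,p_t(\omega)\}$.

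Next I would identify $\rho(\omega)$ with the quantities already in play. For $\omega=\alpha_i\in\mathcal{A}$, we have $\rho(\alpha_i)=\rank_\BB\{p_1(\alpha_i),\ldots,p_t(\alpha_i)\}=b_i$, by the definition of $b_i$ recorded just before the lemma (multiplying by the nonzero scalar $\lambda_i$ does not change the $\BB$-rank). For $\omega=\beta_i\in\mathcal{B}$, Condition (C2) of Definition~\ref{def:evalmatrix} together with Lemma~\ref{lem:poly} gives $\lambda_i p_j(\beta_i)=\bM[i,j]$, and the rows $\bM[1,:],\ldots,\bM[\ell,:]$ are all nonzero scalar multiples of the full-$\BB$-rank vector $\bM_0$; therefore $\rank_\BB\{p_1(\beta_i),\ldots,p_t(\beta_i)\}=t$, so $S_{\beta_i}=\emptyset$ and those points contribute nothing to $\Omega$.

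Putting these together,
\begin{equation*}
	\Omega=\frac{1}{|\FF|-1}\left(\sum_{i\in[\ell]}\big(|\BB|^{t-t}-1\big)+\sum_{i\in[d]}\big(|\BB|^{t-b_i}-1\big)\right)=\frac{1}{|\FF|-1}\sum_{i\in[d]}\big(|\BB|^{t-b_i}-1\big),
\end{equation*}
since each $\mathcal{B}$-term is $|\BB|^0-1=0$. I do not expect any real obstacle here; the only point requiring a little care is the rank-nullity bookkeeping over $\BB$ and making sure the scalars $\lambda_i$ (which are nonzero by~\eqref{GRSM}) are correctly absorbed, and confirming via (C2) that the $\mathcal{B}$-points genuinely drop out. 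A subsequent lemma (not needed here) will compute $\Omega$ a second way, directly in terms of degrees of the $p_j$, to extract the constraint on $\sum_i b_i$.
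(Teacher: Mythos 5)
Your proposal is correct and follows essentially the same route as the paper: swap the summation order, observe that $S_\omega\cup\{\pmb 0\}$ is the kernel of the $\BB$-linear map $\bx\mapsto p_\bx(\omega)$ whose image has $\BB$-rank $\rank_\BB\{p_1(\omega),\ldots,p_t(\omega)\}$, use Condition~(C2) to conclude the $\mathcal{B}$-points contribute nothing, and identify the rank at $\alpha_i$ with $b_i$. The only difference is that you spell out the scalar-invariance of rank under the nonzero $\lambda_i$ explicitly, which the paper handles implicitly in the definition of $b_i$; this is a welcome clarification but not a substantive departure.
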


\begin{proof}
First, observe that $\bx\in S_{\omega}$ if and only if $\bx$ belongs to the null space of the matrix $[p_1(\omega),\ldots,p_t(\omega)]$.
In other words, for $\omega\in \B\cup \A$, we have $\sum_{\bx\in \mathbb{B}^t\setminus\{\pmb{0}\}}\mathbb{I}(\bx\in S_{\omega}) = |\BB|^{t-b_\omega}-1$,
where $b_\omega$ is the rank of $[p_1(\omega),\ldots,p_t(\omega)]$. 
Now, if $\omega\in \B$, we have that $b_\omega=t$ from Condition (C2) and so, $\sum_{\bx\in \mathbb{B}^t\setminus\{\pmb{0}\}}\mathbb{I}(\bx\in S_{\omega})=0$. 
On the other hand, for $i\in [d]$, we have that $b_{\alpha_i}=b_i$ and so,  $\sum_{\bx\in \mathbb{B}^t\setminus\{\pmb{0}\}}\mathbb{I}(\bx\in S_{\alpha_i})=|\BB|^{t-b_i}-1$. Therefore, after switching the summation order, we have
\[\sum_{\bx\in \mathbb{B}^t\setminus\{\pmb{0}\}}\sum_{\omega\in \mathcal{B}\cup\mathcal{A}}\mathbb{I}(\bx\in S_{\omega}) 
= \sum_{\omega\in \mathcal{B}\cup\mathcal{A}} \sum_{\bx\in \mathbb{B}^t\setminus\{\pmb{0}\}}\mathbb{I}(\bx\in S_{\omega})
= \sum_{i\in[d]}(|\BB|^{t-b_{i}}-1)\,.
\] 
Hence, the lemma follows.
\end{proof}

\begin{lemma}\label{lem:Omega-2}
We have the following inequality:
\begin{equation*}
	\Omega\leq(l+d)-k-1.
\end{equation*}
\end{lemma}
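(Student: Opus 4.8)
The plan is to bound $\Omega$ from above by counting, for each nonzero $\bx\in\BB^t$, how many evaluation points $\omega$ can be roots of the polynomial $p_{\bx}(X)$, and then averaging. The key observation is that $p_{\bx}(X)$ is a $\BB$-linear combination of $p_1(X),\ldots,p_t(X)$, each of degree at most $\ell+d-k-1$, so $p_{\bx}(X)$ itself has degree at most $\ell+d-k-1$. Hence, provided $p_{\bx}(X)\not\equiv 0$, the number of $\omega\in\B\cup\A$ with $p_{\bx}(\omega)=0$ is at most $\ell+d-k-1$; that is, $\sum_{\omega\in\B\cup\A}\mathbb{I}(\bx\in S_\omega)\le \ell+d-k-1$ whenever $p_{\bx}\not\equiv 0$. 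Note that the definition of $S_\omega$ already excludes $\bx=\pmb 0$, so the only $\bx$ we must worry about are nonzero vectors with $p_{\bx}\equiv 0$.

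First I would argue that there are relatively few such $\bx$. By Condition (C2) of Definition~\ref{def:evalmatrix}, the row $\pmb M_0$ has $\BB$-rank $t$, and from Lemma~\ref{lem:poly} the coordinates of this row are (up to the nonzero scalars coming from $\pmb\kappa$ and $\lambda_1$) precisely $p_1(\beta_1),\ldots,p_t(\beta_1)$; more conceptually, the map $\bx\mapsto\big(p_\bx(\beta_1),\ldots,p_\bx(\beta_\ell)\big)$, equivalently the single value $p_\bx(\beta_1)$ rescaled, is injective on $\BB^t$ because $\rank_\BB\pmb M_0=t$ forces the $t$ vectors $\pmb M[1,:],\ldots$ — more carefully, the $t$ columns evaluated at $\beta_1$ — to be $\BB$-linearly independent. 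Consequently, if $p_{\bx}\equiv 0$ then in particular $p_{\bx}(\beta_1)=0$, which by this injectivity forces $\bx=\pmb 0$. Therefore $p_{\bx}\not\equiv 0$ for every $\bx\in\BB^t\setminus\{\pmb 0\}$, and the degree bound applies to all of them.

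Putting this together: for every $\bx\in\BB^t\setminus\{\pmb 0\}$ we have $\sum_{\omega\in\B\cup\A}\mathbb{I}(\bx\in S_\omega)\le \ell+d-k-1$, and summing over the $|\FF|-1=|\BB|^t-1$ such vectors and dividing by $|\FF|-1$ gives $\Omega\le \ell+d-k-1$, as claimed.

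The main obstacle I anticipate is making the injectivity argument from Condition (C2) fully rigorous: one must carefully unwind the identification of elements of $\FF$ with column vectors over $\BB$, check that $\rank_\BB\pmb M_0=t$ is exactly the statement that the assignment $\bx\mapsto p_\bx(\beta_1)$ (viewed in $\FF\cong\BB^t$) is a $\BB$-linear bijection, and confirm that this is what rules out a nonzero $\bx$ in the kernel of all the $p_j$. Once that identification is in place, the degree argument and the averaging are routine.
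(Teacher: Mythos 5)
Your proof is correct and takes essentially the same approach as the paper: both bound the number of roots of $p_{\bx}(X)$ among $\B\cup\A$ by its degree $\ell+d-k-1$ and conclude the bound on $\Omega$ by averaging (you bound each summand directly; the paper invokes a pigeonhole step, but these are the same argument). You do add one piece of care that the paper's prose glosses over: you explicitly verify, via Condition~(C2) and Lemma~\ref{lem:poly}, that $p_{\bx}\not\equiv 0$ for every nonzero $\bx$, which is exactly what is needed to license the degree-to-root-count step; the paper leaves this implicit (it only surfaces in the proof of Lemma~\ref{lem:Omega-1}), so your version is a slight tightening rather than a different route.
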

\begin{proof}
Now, $\bx^{*}\in S_\omega$ implies that $\omega$ is a root of the polynomial $x_1^*p_1(X)+\cdots+x_t^*p_t(X)$.
Since $\Omega$ is the average number of such roots, there exists  $\bx^*\in\mathbb{B}^t\setminus\{\pmb{0}\}$ 
so such polynomial $x_1^*p_1(X)+\cdots+x_t^*p_t(X)$ has at least $\Omega$ roots. 
On the other hand, since the polynomials $p_1(X),\ldots,p_t(X)$ have degree at most $\ell+d-k-1$, the polynomial $x_1^*p_1(X)+\cdots+x_t^*p_t(X)$ has no more than $l+d-k-1$ roots and so, the lemma follows.
\end{proof}

Putting Lemmas~\ref{lem:Omega-1} and~\ref{lem:Omega-2} together, we have that $\sum_{i\in\mathcal{A}}b_i\geq b_{\min}$, where
\begin{equation}\label{optimization}
	b_{\min}=\min_{b_{i}\in\{0,1,\ldots,t\}}\left\{\sum_{i\in [d]}b_{i}: \sum_{i\in [d]} |\mathbb{B}|^{-b_{i}}\leq L\right\}\,.
\end{equation}
Here,  $L$ denotes the quantity $\frac{1}{|\mathbb{F}|}\Big[(|\mathbb{F}|-1)(\ell+d-k-1)+ d\Big]$ (see also, \eqref{eq:L}).

If we relax $b_i$'s to take fractional values, the minimum is achieved when $b_1=b_2=\ldots = b_d$ and we have the optimal value
\begin{equation}\label{bound:fractional}
	b_{\min}\ge b^*_{\min}=d\log_{|\mathbb{B}|}\frac{|\mathbb{F}|d}{(|\mathbb{F}|-1)(\ell + d-k-1)+ d}\,.
\end{equation}

As in \cite{DM}, the exact solution can also be obtained by solving the integer program~\eqref{optimization}. 
That is, we insist that $b_{i}\in\{0,1,\ldots,t\}$ for all $i\in [d]$.
More precisely, for the case where $\log_{|\mathbb{B}|} \frac{d}{L} \notin \mathbb{Z}$, the following bound improves \eqref{bound:fractional}.
Specifically, we let $n_0$ be as defined in \eqref{eq:n0} and set 
\begin{equation}
	b_1^* = \cdots = b_{n_0}^* = \left\lfloor \log_{|\mathbb{B}|} \frac{d}{L}\right\rfloor\quad\text{and}\quad b_{n_0 + 1}^* = \cdots = b_{\ell + d - \ell}^* = \left\lceil \log_{|\mathbb{B}|} \frac{d}{L} \right\rceil \, .
\end{equation}
Then $b_1^*, b_2^*,\ldots, b_{d}^*$ is the optimal solution for \eqref{optimization}. 
Computing the optimal value, we complete the proof of Theorem~\ref{thm:lowerbound}.

\section{Numerical Results}\label{sec:numerical}

 \begin{figure}[!t]
	\centering
	\includegraphics[width=\textwidth]{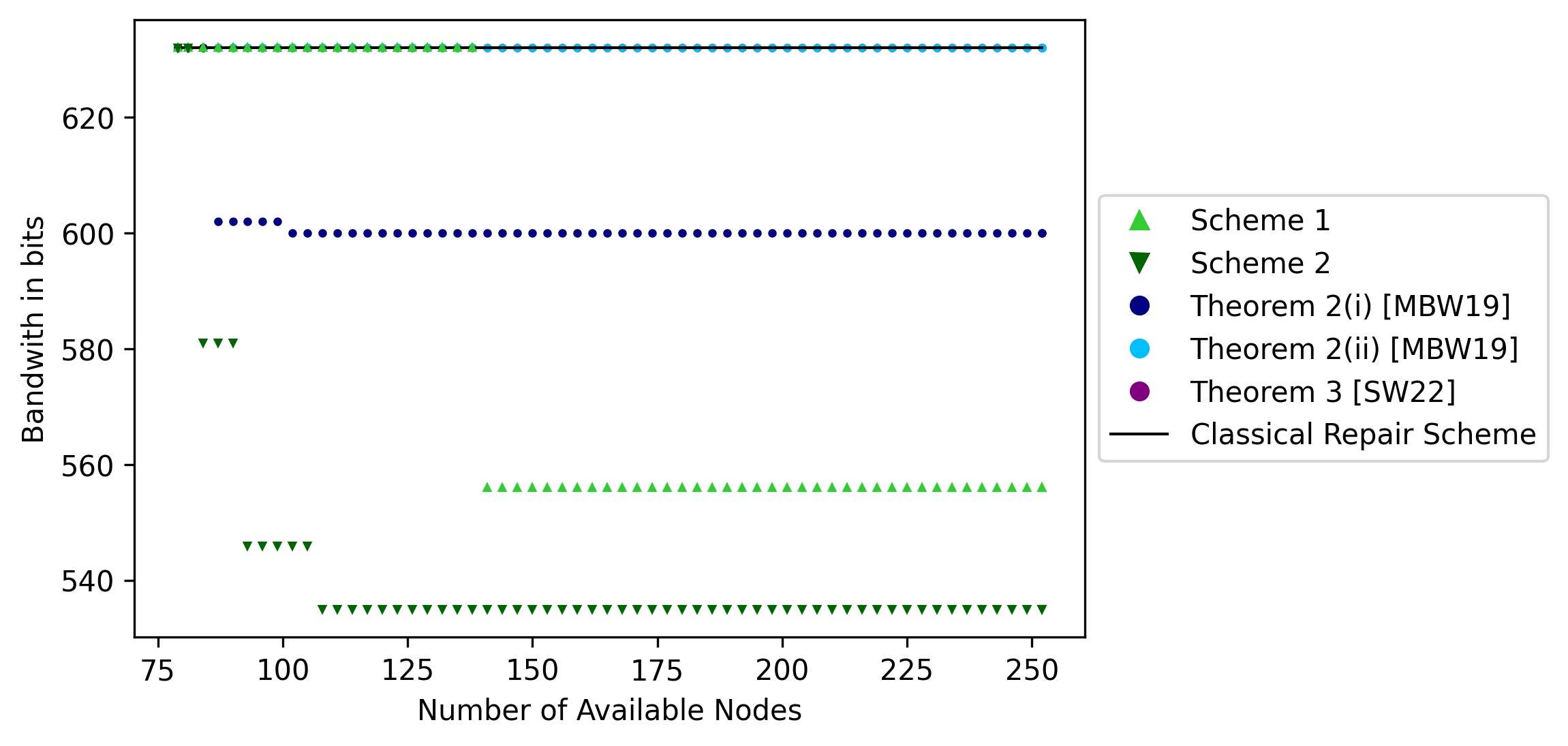}
	\caption{Recovery bandwidth for sums of $\ell=4$ coded symbols of RS code over $\textrm{GF}(2^{8})$ with $k=79$}
	\label{2}
\end{figure}

In this section, we compare the evaluation bandwidth of our schemes with those in previous work and lower bounds. Motivated by practical applications, we consider two different scenarios. In the first one, we vary the number of available nodes while the size $\ell$ of the weighted sum is fixed. In the second one, we change $\ell$ while the number of available nodes is such that it ensures the best bandwidth value.

\begin{enumerate}[(a)]
    \item {\bf Varying Number of Available Nodes, $d$}. Let us set $\ell=2$ and $k=79$. Consider $\mathbb{F}=\textrm{GF}(2^8)$ and $n=|\mathbb{F}|-\ell=254$ nodes. Hence, the classical evaluation scheme can compute any weighted sum $\SSS$ of two coded symbols using any $k=79$ coded symbols, or equivalently, downloading $632$ bits from any $79$ available nodes.
    
    First, suppose that $d=141$ nodes are available. Then Schemes~\ref{scheme:trace} and~\ref{scheme:subspace} are able to evaluate $\SSS$ with $436$ and $423$ bits, respectively. 
    In contrast, the schemes in Theorem~\ref{th::MBW}(i)~\cite{Mardia} and Theorem~\ref{th::MBW}(ii)~\cite{Mardia} have evaluation bandwidth $496$ and $558$ bits, respectively, while the schemes in Theorem~\ref{th::SW}~\cite{SW} is unable to improve the classical evaluation bandwidth. 
    Suppose further that all $d=254$ nodes are available. Then the evaluation bandwidth of both Schemes~\ref{scheme:trace} and~\ref{scheme:subspace} is $410$ bits. 
    In contrast, the schemes in Theorem~\ref{th::MBW}(i)~\cite{Mardia} and Theorem~\ref{th::MBW}(ii)~\cite{Mardia} have evaluation bandwidth $496$ and $409$ bits, respectively, while the schemes in Theorem~\ref{th::SW}~\cite{SW} cannot improve the classical evaluation bandwidth. We do note that such an improvement of scheme in  Theorem~\ref{th::MBW}(ii)~\cite{Mardia} over our Schemes~\ref{scheme:trace} and~\ref{scheme:subspace} can be obtained only for the fields of small size and high number of available nodes $d$. In Figure~\ref{1}, we complete the picture and provide the evaluation bandwidth for various schemes and lower bounds for $79\leq d\leq 254$. We do note that during plotting the graph hereinafter we take minimum overall bandwidth for a smaller or equal number of available nodes. We also emphasize the significant gap between constructions and lower bounds that grows with the increase of $d$.

	Next, we consider the case of $\ell=4$ while the other settings remain the same. 
	Again, the classical evaluation bandwidth remains at 632 bits.
	Suppose that all $d=252$ nodes are available. Then Schemes~\ref{scheme:trace} and~\ref{scheme:subspace} are able to evaluate $\SSS$ with $556$ and $535$ bits, respectively. In contrast, the scheme in Theorem~\ref{th::MBW}(i)~\cite{Mardia} has evaluation bandwidth $600$ bits, while the schemes in Theorem~\ref{th::MBW}(ii)~\cite{Mardia} and Theorem~\ref{th::SW}~\cite{SW} cannot improve the classical evaluation bandwidth. In Figure~\ref{2}, we complete the picture and provide the evaluation bandwidth for various schemes and lower bounds for $79\leq d\leq 252$.

    \item {\bf Varying Size of Weighted Sum, $\ell$}. Let us consider the case of $\textrm{GF}(2^{32})$ and $k=2^{30}$ when our Schemes~\ref{scheme:trace} and~\ref{scheme:subspace} and scheme from Theorem~\ref{th::SW}~\cite{SW} can provide results that differ from the classical evaluation scheme. The values of bandwidth in bits for different values of $\ell$ are presented in Table~\ref{table}. Note that we vary the number of code symbols participated in the recovery process to obtain the best bandwidth value, but such a number must be less than $q^t-\ell$ as $\ell$ symbols are assumed to be erased. We see that for high values of $\ell$, the scheme from Theorem~\ref{th::SW}~\cite{SW} provides better results. In contrast, for small values of $\ell$, results by Schemes~\ref{scheme:trace} and~\ref{scheme:subspace} are better. Another interesting observation is that the gap between constructions and lower bounds decreases with $\ell$. We do note that such behavior also holds for the higher size of field and large values of $k$ when our Schemes~\ref{scheme:trace} and~\ref{scheme:subspace} and the scheme from Theorem~\ref{th::SW}~\cite{SW} improves over the  classical evaluation scheme. 
    
    \begin{table}[!t]
    \centering
\begin{tabular}{|l|r|r|r|r|r|r|}
\hline
Scheme                                  & $\ell=k$        & $\ell=\frac{k}{2}$ 		& $\ell=\frac{k}{2^5}$ & $\ell=\frac{k}{2^{12}}$ & $\ell=\frac{k}{2^{20}}$ & $\ell=\frac{k}{2^{28}}$ \\ \hline
Scheme~\ref{scheme:trace}               & $32\cdot2^{30}$ & $32\cdot2^{30}$   		& $32\cdot2^{30}$      & $32\cdot2^{30}$         		& $\pmb{17\cdot2^{30}}$        & $8\cdot2^{30}$          \\ \hline
Scheme~\ref{scheme:subspace}            & $32\cdot2^{30}$ & $32\cdot2^{30}$   		& $31.9\cdot2^{30}$    		& $\pmb{25.5\cdot2^{30}}$   & $\pmb{17\cdot2^{30}}$         & $\pmb{7.5\cdot2^{30}}$        \\ \hline
Theorem~\ref{th::MBW}(i)~\cite{Mardia}  & $32\cdot2^{30}$ & $32\cdot2^{30}$   		& $32\cdot2^{30}$      		& $26.5\cdot2^{30}$         & $19\cdot2^{30}$         & $8.5\cdot2^{30}$         \\ \hline
Theorem~\ref{th::MBW}(ii)~\cite{Mardia} & $32\cdot2^{30}$ & $32\cdot2^{30}$   		& $32\cdot2^{30}$     		& $32\cdot2^{30}$         	& $32\cdot2^{30}$         & $12\cdot2^{30}$         \\ \hline
Theorem~\ref{th::SW}~\cite{SW}          & $32\cdot2^{30}$ & $\pmb{26\cdot2^{30}}$   &  $\pmb{26\cdot2^{30}}$     & $26\cdot2^{30}$         	& $26\cdot2^{30}$         & $26\cdot2^{30}$         \\ \hline
Classical Repair Scheme                 & $32\cdot2^{30}$ & $32\cdot2^{30}$   		& $32\cdot2^{30}$      & $32\cdot2^{30}$         		& $32\cdot2^{30}$         & $32\cdot2^{30}$         \\ \hline
Integer (Lower) Bound & $2$ & $1\cdot2^{30}$ & $1.9\cdot 2^{30}$ & $2\cdot 2^{30}$ & $2\cdot 2^{30}$ & $2\cdot 2^{30}$ \\ \hline
Fractional (Lower) Bound & $2$ & $1.6\cdot 2^{29}$ & $1.6\cdot 2^{30}$ & $1.7 \cdot 2^{30}$ & $1.7 \cdot 2^{30}$ & $1.7 \cdot 2^{30}$ \\ \hline

\end{tabular}
\caption{Recovery bandwidth for Reed-Solomon code over $\textrm{GF}(2^{32})$ and $k=2^{30}$. We measure the bandwidth in bits and we vary the number of code symbols $d$ that participated in the recovery process to obtain the best bandwidth value while satisfying theorems requirements.
}
\label{table}
\end{table}
    
\end{enumerate}

\section{Conclusion and Open Problems}
\label{sec:conclusion}
We investigate the problem of efficiently computing linear combinations of coded symbols. 
We characterize the evaluation process in terms of an evaluation matrix for any $\FF$-linear codes.
For Reed-Solomon codes, we propose explicit evaluation schemes as well as a lower bound on recovery bandwidth. 
Several open questions remain, including:
\begin{enumerate}[(A)]
	\item closing the gap between the lower bound on evaluation bandwidth and an explicit construction;
	\item developing low-bandwidth evaluation schemes for other classes of functions, for example, low-degree polynomials;
	\item extending the framework to other parameter regimes, for example, when $t$ is extremely large \cite{Barg} or over prime fields \cite{Tamo1};
	\item ensuring robustness, that is, ensuring correct evaluation in the presence of erroneous nodes.
\end{enumerate}

\appendix

\section{Proof of the Form of Theorem~\ref{th::SW}}\label{SWPROOF}
In this appendix, we discuss on why \cite[Theorem 19]{SW} can be written as Theorem~\ref{th::SW}. For the case of $|{\cal B}| = \ell$, $|{\cal A}| = d$, $n = |\BB|^t$ from \cite[Theorem 19]{SW}, we have four conditions to be fulfilled:
\begin{enumerate}[(a)]
\item $\delta\ge \gamma + \frac{1}{|\BB|}$,
\item $\epsilon \le 1-\frac{k}{|\BB|^t}$,
\item $n-d = |I| <\gamma n$, and
\item $(\epsilon - \delta)|\BB|$ is an integer.
\end{enumerate}
From conditions (a)-(c), we obtain
\begin{align}
    \epsilon - \delta \le 1-\frac{k}{|\BB|^t}-\gamma -\frac{1}{|\BB|} < 1-\frac{k}{|\BB|^t}-\left(1-\frac{d}{|\BB|^t}\right) -\frac{1}{|\BB|} = \frac{d-k}{|\BB|^t} - \frac{1}{|\BB|},
\end{align}
which implies
\begin{align}\label{eq:SWproof}
    (\epsilon - \delta)|\BB| < \frac{d-k}{|\BB|^{t-1}} -1.
\end{align}
Let $s = (\epsilon-\delta)|B|\in\mathbb{Z}$ (from condition (d)), then we can write \eqref{eq:SWproof} as
\begin{align}
    d \ge |\BB|^{t-1}(s+1)+k + 1.
\end{align}
Combining it with the fact that $d+\ell\leq n$, we translate \cite[Theorem 19]{SW} to Theorem~\ref{th::SW}.

\section{Characterization of Evaluation Schemes using Evaluation Matrices}

In this appendix, we complete the proof of Theorem~\ref{thm:characterization}.
In other words, we demonstrate the necessary and sufficient conditions for the existence of low-bandwidth evaluation schemes.

\noindent $[\Leftarrow]$ First, we prove the sufficient condition. 
That is, using an evaluation matrix, we construct the functions $g_{j}:\;\mathbb{F}\to \mathbb{B}^{b_{j}}$ for $j\in [d]$ 
such that $\sum_{j\in [t]} b_j = b$. 
Hence, we obtain an $(\pmb{\kappa},\C)$-weighted-sum evaluation scheme with bandwidth $b$. 

Now, using Condition (C1) in Definition~\ref{def:evalmatrix}, the $m$-th column $\pmb{M}[:,m]$ belongs to the dual code $\mathcal{C}^{\perp}$ for all $m\in[t]$. This means that, for all $\pmb{c}=(c_1,\ldots,c_d,c_{d+1},\ldots,c_{\ell+d})\in\mathcal{C}$, we have that
\begin{equation*}
	\sum_{i\in[\ell + d]}c_i \pmb{M}[i,m]=0\text{, or, }
	\sum_{i\in [\ell]}c_i \pmb{M}[i,m]=-\sum_{j\in[d]}c_{\ell+j} \pmb{M}[\ell+j,m].
\end{equation*}

Next, we multiply the equation by $\Lambda \triangleq \prod_{i\in [\ell]}\kappa_i$. Then using Condition~(C2), we have that 
$\Lambda \bM[i,m] = \kappa_i \bM_0[1,m]$, and so,
\begin{equation*}
	\sum_{i\in [\ell]}\kappa_i c_i \bM_0[m] = -\sum_{j\in[d]}\Lambda c_{\ell+j} \pmb{M}[\ell+j,m].
\end{equation*}

Applying trace function to both sides of the equation above and utilizing its linearity, we obtain
\begin{equation*}
	\tr\left(\left(\sum_{i\in [\ell]}\kappa_i c_i\right) \bM_0[m] \right)=-\tr\left(\sum_{j\in[d]}\Lambda c_{\ell+j} \bM[\ell+j,m]\right).
\end{equation*}

For $j\in[d]$, we set $b_{j} \triangleq \rank_{\mathbb{B}}(\pmb{M}[\ell+j,:])$ and form a $\BB$-basis $\chi_{1}^{j},\ldots,\chi_{b_{j}}^{j}$ corresponding to the column space of $\pmb{M}[\ell+j,:]$. 
Then for $m\in[t]$, we can write $\Lambda c_{\ell+j} \pmb{M}[\ell+j,l]$ as a $\BB$-linear combination of elements in $\chi_{1}^{j},\ldots,\chi_{b_{j}}^{j}$. So, for some $a_{j,1,m},\ldots,a_{j,b_{j},m}\in \BB$, we have that
\begin{equation}\label{eq:traceM}
	\tr\Big(\Lambda c_{\ell+j}\pmb{M}[\ell+j,m]\Big)=a_{j,1,m}\tr\left(c_{\ell+j}\Lambda\chi_{1}^{j}\right)+\cdots+a_{j,b_{j},m}\tr\left(c_{\ell+j}\Lambda\chi_{b_{j}}^{j}\right).
\end{equation}

Now, for $j\in [d]$, we define the function
\begin{equation}
	g_{j}(x):\; \mathbb{F}\to \mathbb{B}^{b_{j}},\; x\mapsto \left(\tr\left(x\Lambda\chi_{1}^{j}\right),\ldots,\tr\left(x\Lambda\chi_{b_{j}}^{j}\right)\right).
\end{equation}
Then we observe that for all $m\in[t]$, $\tr(\Lambda c_{\ell+j}\pmb{M}[\ell+j,l])$ can be computed using  $g_{j}(c_{\ell+j})$ and  $a_{j,1,m},\ldots,a_{j,b_{j},m}$. Then using \eqref{eq:traceM}, we can compute $\tr\left(\left(\sum_{i\in [\ell]}\kappa_ic_i\right)\pmb{M}_0[m]\right)$ for all  $m\in[t]$.

Finally, from Condition (C2), we have that the rank of the matrix $\pmb{M}_0$ is $t$. 
In other words, $\{\bM_0[m]: m\in [t]\}$ forms a $\BB$-basis.
Then Proposition~\ref{prop:trace} states there exists a trace-dual basis $\{\widetilde{\bM_0}[m]: m\in [t]\}$ and we have that
\begin{equation*}
	\sum_{i\in [\ell]}\kappa_ic_i=\sum_{m\in [t]} \tr\left(\left(\sum_{i\in [\ell]}\kappa_ic_i\right)\pmb{M}_0[m]\right)\widetilde{\pmb{M}_0}[m].
\end{equation*}
It remains to determine the evaluation bandwidth. 
This follows from the definition of $g_j$ and Condition (C3). 
Specifically, since $\rank_{\mathbb{B}}(\pmb{M}[\ell+j,:]) = b_j$ for $j\in [d]$, 
we have that the evaluation bandwidth is $\sum_{j=1}^db_j = b$.
\vspace{3mm}

\noindent $[\Rightarrow]$ Here, we prove the necessary condition. That is, we construct the evaluation matrix $\bM$ using the functions $g_{j}$. Let us choose a basis $\{u_1,\ldots,u_t\}$ of $\mathbb{F}$ over $\mathbb{B}$, its trace-dual basis $\{\tilde{u}_1,\ldots,\tilde{u}_t\}$ and write down $\sum_{i\in [\ell]}\kappa_ic_i$ as a $\mathbb{B}$-linear combination of elements of the latter one, namely
\begin{equation}\label{eq:recmat}
	\sum_{i\in [\ell]}\kappa_ic_i=\sum_{m=1}^t\zeta_m\tilde{u}_m,
\end{equation}
where $\zeta_m\in\mathbb{B}$ for all $m\in[t]$. The fact that the considered recovery scheme is $\mathbb{B}$-linear means that coefficients $\zeta_m$ are $\mathbb{B}$-linear combinations of replies $\{g_j(c_{\ell+j}):\;j\in[d]\}$. As a result, we have that 
\begin{equation*}
    \zeta_m=\sum_{j=1}^d\eta_j^mg_j(c_{\ell+j}),
\end{equation*}
where $\eta_j^m\in\mathbb{B}$.

As all functions $g_j$ are $\mathbb{B}$-linear maps from $\mathbb{F}$ to $\mathbb{B}^{b_j}$, we can define $g_j(c_{\ell+j})$ as follows
\begin{equation*}
    g_j(c_{\ell+j})=\gamma_1^j\textrm{Tr}(\theta_1^jc_{\ell+j})+\cdots+\gamma_{b_j}^j\textrm{Tr}(\theta_{b_j}^jc_{\ell+j}),
\end{equation*}
where $\{\theta_1^j,\ldots,\theta_{b_j}^j\}\in\mathbb{F}$ are linearly independent as vectors over $\mathbb{B}$ and $\{\gamma_1^j,\ldots,\gamma_{b_j}^j\}\in\mathbb{B}$. Consequently,
\begin{align*}
\zeta_m=\sum_{j=1}^d[\xi_{m,j,1}\textrm{Tr}(\theta_1^jc_{\ell+j})+\cdots+\xi_{m,j,b_j}\textrm{Tr}(\theta_{b_j}^jc_{\ell+j})],  \end{align*}
where $\xi_{m,j,l}=\eta_j^m\gamma_{l}$ and $\xi_{m,j,l}\in\mathbb{B}$ for all $l\in[t]$.
Let us multiply both sides of \eqref{eq:recmat} by $u_m$, where $m\in[t]$, and apply trace function from $\mathbb{F}$ to $\mathbb{B}$. As a result we have
\begin{equation*}
\textrm{Tr}\left(\sum_{i\in [\ell]}\kappa_ic_iu_m\right)=\zeta_1\textrm{Tr}(\tilde{u}_1u_m)+\cdots+\zeta_m\textrm{Tr}(\tilde{u}_mu_m)+\cdots+\zeta_t\textrm{Tr}(\tilde{u}_tu_m),
\end{equation*}
but bases $\{u_1,\ldots,u_t\}$ and $\{\tilde{u}_1,\ldots,\tilde{u}_t\}$ are trace-dual. Hence, 
\begin{equation*}
\textrm{Tr}\left(\sum_{i\in [\ell]}\kappa_ic_iu_m\right)=\zeta_m=\sum_{j=1}^d\left(\xi_{m,j,1}\textrm{Tr}(\theta_1^jc_{\ell+j})+\cdots+\xi_{m,j,b_j}\textrm{Tr}(\theta_{b_j}^jc_{\ell+j})\right).    
\end{equation*}
Employing the fact that $\{\xi_{m,j,1},\ldots,\xi_{m,j,b_j}\}\in\mathbb{B}$ we have
\begin{equation*}
    \textrm{Tr}\left(\sum_{i\in [\ell]}\kappa_ic_iu_m\right)=\sum_{j=1}^d\textrm{Tr}\left((\xi_{m,j,1}\theta_{1}^j+\cdots+\xi_{m,j,b_j}\theta_{b_j}^j)c_{\ell+j}\right).
\end{equation*}
We can define $\mu_{m,j}=\xi_{m,j,1}\theta_{1}^j+\cdots+\xi_{m,j,b_j}\theta_{b_j}^j$, where $\mu_{m,j}\in {\rm span}_{\mathbb{B}}\{\theta_1^j,\ldots,\theta_{b_j}^j\}$. Note that we have chosen $\theta_1^j,\ldots,\theta_{b_j}^j$ so that $\rank_{\mathbb{B}}\{\mu_{1,j},\ldots,\mu_{t,j}\}=b_j$ as otherwise the recovery scheme can be trivially improved by replacing $\{\theta_1^j,\ldots,\theta_{b_j}^j\}$ with basis of $\{\mu_{1,j},\ldots,\mu_{t,j}\}$ over $\mathbb{B}$ that contains less than $b_j$ elements. Consequently,
\begin{equation}\label{eq:recmat2}
\textrm{Tr}\left(\sum_{i\in [\ell]}\kappa_ic_iu_m\right)=\sum_{j=1}^d\textrm{Tr}(\mu_{m,j}c_{\ell+j}).
\end{equation}

Let $\{\delta_1,\ldots,\delta_t\}$ be a basis of $\mathbb{F}$ over $\mathbb{B}$ and $\{\tilde{\delta}_1,\ldots,\tilde{\delta}_{t}\}$ be its trace-dual basis. Due to the linearity of code $\mathcal{C}$ over $\mathbb{F}$, \eqref{eq:recmat} is also valid for codeword $(\delta_lc_1,\ldots,\delta_lc_{\ell+d})$. Hence for all $l\in[t]$ we have
\begin{equation*}
	\tr\left(\left[\sum_{i\in [\ell]}\kappa_ic_i\right]u_m\delta_l\right)=\sum_{j\in[d]}\tr(\mu_{mj}c_{\ell+j}\delta_l).
\end{equation*}
Multiplying both sides by $\tilde{\delta}_l$, we obtain
\begin{equation*}
	\tr\left(\left[\sum_{i\in [\ell]}\kappa_ic_i\right]u_m\delta_l\right)\tilde{\delta}_l=\sum_{j\in[d]}\tr(\mu_{mj}c_{\ell+j}\delta_l)\tilde{\delta}_l.
\end{equation*}
Taking sums over all $l\in[t]$, we have  
\begin{equation}\label{eq:recmat3}
	u_m\sum_{i\in[\ell]}\kappa_ic_i=\sum_{j\in[d]}\mu_{mj}c_{\ell+j}.
\end{equation}

According to~\eqref{eq:recmat3}, for all $m\in[t]$, $\left(u_m\kappa_1,\ldots,u_m\kappa_\ell,-\mu_{m(\ell+1)},\ldots,-\mu_{m(\ell + d)}\right)$ are codewords of the dual code and we can form the columns of recovery matrix from them. In such a case the latter has the following form
$$ \pmb{M} = \left[ \begin{matrix} \kappa_1u_1 & \ldots & \kappa_1u_t \\ \vdots & \ddots & \vdots \\ \kappa_{\ell}u_1 & \ldots & \kappa_{\ell}u_t \\ -\mu_{1,\ell+1} & \ldots & -\mu_{t,\ell+1} \\ \vdots & \ddots & \vdots \\ -\mu_{1,\ell+d} & \ldots & -\mu_{t,\ell+d} \end{matrix} \right]. $$

It is clear that $\frac{1}{\kappa_1}\pmb{M}[1,:]=\cdots=\frac{1}{\kappa_\ell}\pmb{M}[\ell,:]=[u_1,\ldots,u_t]$ and has full rank over $\mathbb{B}$. 
Also, for $j\in[d]$, we have that $\rank_{\mathbb{B}}\pmb{M}[\ell+j,:]=\rank_{\mathbb{B}}(-\mu_{1,\ell+j},\ldots-\mu_{t,\ell+j})=b_j$. Therefore, we have $\sum_{j\in[d]}\rank_{\mathbb{B}}\pmb{M}[\ell+j,:]=\sum_{j\in[d]}b_j=b$.

\bibliographystyle{alpha}
\bibliography{ref_list}


\end{document}